\title{A new characterization of $P_k$-free graphs}
\author{Eglantine Camby\\
Universit\'e Libre de Bruxelles\\
D\'epartement de Math\'ematique\\
Boulevard du Triomphe, 1050 Brussels, Belgium\\
\texttt{ecamby@ulb.ac.be}\\~\\
Oliver Schaudt\\
University of Cologne\\
Institute for Computer Science\\
Weyertal 80, 50321 Cologne, Germany\\
\texttt{schaudto@uni-koeln.de}}
\begin{document}

\newtheorem{theorem}{Theorem}
\newtheorem{lemma}{Lemma}
\newtheorem{observation}{Observation}
\newtheorem{corollary}{Corollary}
\newtheorem{claim}{Claim}
\newtheorem{conjecture}{Conjecture}

\maketitle

\begin{abstract}
\noindent 
The class of graphs that do not contain an induced path on $k$ vertices, $P_k$-free graphs, plays a prominent role in algorithmic graph theory.
This motivates the search for special structural properties of $P_k$-free graphs, including alternative characterizations.

Let $G$ be a connected $P_k$-free graph, $k \ge 4$.
We show that $G$ admits a connected dominating set whose induced subgraph is either $P_{k-2}$-free, or isomorphic to $P_{k-2}$.
Surprisingly, it turns out that every minimum connected dominating set of $G$ has this property.

This yields a new characterization for $P_k$-free graphs:
a graph $G$ is $P_k$-free if and only if each connected induced subgraph of $G$ has a connected dominating set whose induced subgraph is either $P_{k-2}$-free, or isomorphic to $C_k$.
This improves and generalizes several previous results; the particular case of $k=7$ solves a problem posed by van 't Hof and Paulusma [A new characterization of $P_6$-free graphs, COCOON 2008].

In the second part of the paper, we present an efficient algorithm that, given a connected graph $G$ on $n$ vertices and $m$ edges, computes a connected dominating set $X$ of $G$ with the following property:
for the minimum $k$ such that $G$ is $P_k$-free, the subgraph induced by $X$ is $P_{k-2}$-free or isomorphic to $P_{k-2}$.

As an application our results, we prove that \textsc{Hypergraph 2-Colorability}, an NP-complete problem in general, can be solved in polynomial time for hypergraphs whose vertex-hyperedge incidence graph is $P_7$-free.

\noindent \textbf{keywords:} $P_k$-free graph, connected domination, computational complexity.

\noindent \textbf{MSC:} 05C69, 05C75, 05C38.
\end{abstract}

\section{Introduction}

A \emph{dominating set} of a graph $G$ is a vertex subset $X$ such that every vertex not in $X$ has a neighbor in $X$.
Dominating sets have been intensively studied in the literature.
The main interest in dominating sets is due to their relevance on both theoretical and practical side.
Moreover, there are interesting variants of domination and many of them are well-studied.

A \textit{connected dominating set} of a graph $G$ is a dominating set $X$ whose induced subgraph, henceforth denoted $G[X]$, is connected.
As usual, a connected dominating set such that every proper subset is not a connected dominating set is called a \emph{minimal connected dominating set}.
A connected dominating set of minimum size is called a \emph{minimum connected dominating set}.

We use the following standard notation.
Let $P_k$ be the induced path on $k$ vertices and let $C_k$ be the induced cycle on $k$ vertices.
If $G$ and $H$ are two graphs, we say that $G$ is \emph{$H$-free} if $H$ does not appear as an induced subgraph of $G$.
Furthermore, if $G$ is $H_1$-free and $H_2$-free for some graphs $H_1$ and $H_2$, we say that $G$ is \emph{$(H_1,H_2)$-free}. If two graphs $G$ and $H$ are isomorphic, we write $G \cong H$.

The class of $P_k$-free graphs has received a fair amount of attention in the theory of graph algorithms.
Given an NP-hard optimization problem, it is often fruitful to study its complexity when the instances are restricted to $P_k$-free graphs.

Let us mention two recent results in this direction: the polynomial time algorithm to compute a stable set of maximum weight, given by Lokshtanov \emph{et al.}~\cite{LVV14}, and the result of Hoang~\emph{et al.}~\cite{HKLSS65} showing that $k$-\textsc{Colorability} is efficiently solvable on $P_5$-free graphs.
The proof of the latter result relies on the fact that a connected $P_5$-free graph has a dominating clique or a dominating $P_3$.

\begin{theorem}[B\'acso and Tuza~\cite{BT90b}]\label{thm:P5}
Let $G$ be a connected $P_5$-free graph.
Then $G$ has a dominating clique or a dominating induced $P_3$.
\end{theorem}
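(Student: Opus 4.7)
I will analyze a minimum connected dominating set $D$ of $G$, which exists because $G$ is connected, and show that $G[D]$ is a clique or isomorphic to $P_3$, so that $D$ itself is the required dominating set. The key observation is the following minimality dichotomy: since $D$ is a minimum connected dominating set, for every $v \in D$ the set $D \setminus \{v\}$ fails to be a connected dominating set, so either $v$ is a cut vertex of $G[D]$ or $v$ admits a \emph{private neighbor} $u \in V(G) \setminus D$ with $N_G(u) \cap D = \{v\}$.

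The first step is to rule out an induced $P_4$ in $G[D]$. Assume for contradiction that $v_1 v_2 v_3 v_4$ is such a path and apply the dichotomy to $v_1$. If $v_1$ has a private neighbor $u$, then $u v_1 v_2 v_3 v_4$ is an induced $P_5$ of $G$. If instead $v_1$ is a cut vertex of $G[D]$, take any neighbor $w$ of $v_1$ in a connected component of $G[D] - v_1$ other than the one containing $v_2$; then $w$ is non-adjacent to $v_2, v_3, v_4$ (they lie in a different component), and $w v_1 v_2 v_3 v_4$ is again an induced $P_5$. Either way we contradict the $P_5$-freeness of $G$, so $G[D]$ is $P_4$-free.

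When $|D| \leq 3$ the connected graph $G[D]$ is already a clique or $P_3$, so we are done. Assume then that $|D| \geq 4$, so that $G[D]$ is a connected cograph of order at least four. If $G[D]$ is a clique we are done; otherwise its cograph join decomposition $G[D] = H_1 + H_2$ is non-trivial, and one part, say $H_1$, must be disconnected. Choose $a, b$ in distinct components of $H_1$ and any $c \in H_2$, so that $a c b$ is an induced $P_3$ in $G[D]$. The non-trivial join makes $G[D]$ 2-connected, so by the dichotomy every vertex of $D$ (in particular $a$ and $b$) possesses a private neighbor; call those of $a$ and $b$ by $u_a, u_b$. Provided $u_a \not\sim u_b$, the path $u_a a c b u_b$ is an induced $P_5$, the desired contradiction.

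I expect the main obstacle to be the residual case $u_a \sim u_b$. My plan is to exploit a fourth vertex $d \in D$, supplied by $|D| \geq 4$, and case-analyze its position in the join decomposition. When $d$ lies in a component of $H_1$ different from those of $a$ and $b$, the path $u_b u_a a c d$ is induced since $a \not\sim d$ in $H_1$ and $u_a, u_b$ are private. When $d \in H_2$, I instead route through $d$'s private neighbor $u_d$, as in $u_b u_a a d u_d$, using that $a \sim d$ by the join while $u_a, u_b \not\sim d, u_d$ by privacy. Since edges among private neighbors in $V(G) \setminus D$ are unconstrained a priori, a short but systematic case analysis is still needed to show that, for every possible pattern of such edges, some combination of two private neighbors with a three-vertex induced path in $G[D]$ produces an induced $P_5$, which completes the contradiction.
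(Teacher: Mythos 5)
The paper does not prove this statement directly (it is quoted from B\'acso and Tuza), but your target claim --- that a minimum connected dominating set $D$ induces a clique or a $P_3$ --- is exactly the $k=5$ case of Theorem~\ref{thm:sufficiency}, so that is the right benchmark. Your first half is correct and matches Lemma~\ref{lem:PkCk}: the cut-vertex/private-neighbor dichotomy for a minimal connected dominating set does rule out an induced $P_4$ in $G[D]$, and the reduction to a disconnected co-component of the cograph $G[D]$ is fine.

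The endgame, however, has a genuine gap, and you partly acknowledge it yourself. First, the claim that a non-trivial join makes $G[D]$ $2$-connected is false: $K_{1,3}=\overline{K_3}+K_1$ is a connected $P_4$-free non-clique on four vertices whose center is a cut vertex, so not every vertex of $D$ is guaranteed a private neighbor (your chosen $a,b$ in distinct components of $H_1$ are fine, but a vertex $d\in H_2$ with $H_2=\{d\}$ need not be). Second, your case analysis on the fourth vertex $d$ omits the case where $d$ lies in the same component of $H_1$ as $a$ or as $b$. Third, and most importantly, when $u_a\sim u_b$ the five vertices $u_a,a,c,b,u_b$ induce a $C_5$, not a $P_5$, and every candidate path you build through further private neighbors (such as $u_b u_a a d u_d$) depends on adjacencies among vertices of $V(G)\setminus D$, which are completely unconstrained; no bounded list of such paths is exhibited that covers all edge patterns, so the contradiction is not actually derived. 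This is precisely the difficulty that the paper's argument is designed to avoid: Lemma~\ref{lem:substitution} shows that $(X\cup\{y_1\})\setminus\{x_3\}$ is again a minimum connected dominating set when $y_1$ is a private neighbor of the endpoint $x_1$ of the induced $P_3$, and iterating this substitution --- using that each new private neighbor has no neighbor in $X\setminus\{x_1,x_2,x_3\}$ --- forces $X=\{x_1,x_2,x_3\}$ by connectivity, with no case analysis on edges outside $X$. I would recommend replacing your final case analysis by that substitution argument.
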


An immediate implication of this result is the following.

\begin{theorem}[B\'acso and Tuza~\cite{BT90b}, Cozzens and Kelleher~\cite{CK90}]\label{thm:P5chara}
Let $G$ be a graph. The following assertions are equivalent.
\begin{enumerate}[(i)]
	\item $G$ is $P_5$-free.
	\item\label{cond:P3-free} Every induced subgraph $H$ of $G$ admits a connected dominating set $X$ such that $H[X]$ is a clique or $H[X] \cong C_5$.  
\end{enumerate}
\end{theorem}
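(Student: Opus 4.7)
The statement is an equivalence, so I would handle the two directions separately and then isolate the main difficulty.

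For the easy direction (ii) $\Rightarrow$ (i), I would argue by contrapositive. If $G$ is not $P_5$-free, pick the vertex set $V$ of an induced $P_5$ and consider the induced subgraph $H = G[V] \cong P_5$. Applying (ii) to $H$, we look for a connected dominating set $X$ such that $H[X]$ is either a clique or isomorphic to $C_5$. But $P_5$ contains no induced $C_5$, and its maximum clique has only two vertices; a quick case check shows that neither a single vertex nor any edge of $P_5$ is dominating (an edge always leaves one endpoint of the path uncovered). Hence (ii) fails on $H$, contradicting our assumption.

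For the harder direction (i) $\Rightarrow$ (ii), fix a connected induced subgraph $H$ of $G$; since the class of $P_5$-free graphs is hereditary, $H$ is also $P_5$-free. If $H$ admits a dominating clique, we take $X$ to be that clique and we are done. Otherwise, Theorem \ref{thm:P5} guarantees a dominating induced $P_3$ in $H$, say on vertices $v_1 v_2 v_3$. The plan is then to bootstrap this $P_3$ into a dominating induced $C_5$.

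Here is the bootstrap step, which is where the real work lies. Because $H$ has no dominating clique, in particular neither of the edges $v_1 v_2$ nor $v_2 v_3$ dominates $H$, so we can choose $u_2 \in V(H) \setminus N[\{v_1, v_2\}]$ and $u_0 \in V(H) \setminus N[\{v_2, v_3\}]$. Since the $P_3$ dominates, we must have $u_2 v_3 \in E(H)$ and $u_0 v_1 \in E(H)$; note also $u_0 \neq u_2$ since their neighborhoods on the $P_3$ differ. Among the five vertices $u_0, v_1, v_2, v_3, u_2$ every adjacency is forced except the pair $u_0 u_2$: if $u_0 u_2 \notin E(H)$ the vertices induce a $P_5$, contradicting the hypothesis. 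Therefore $u_0 u_2 \in E(H)$ and these five vertices induce a $C_5$; since $\{v_1, v_2, v_3\}$ already dominated $H$ and is contained in this $C_5$, the $C_5$ itself is a dominating set, giving us the desired $X$.

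The main obstacle is precisely this bootstrap: Theorem \ref{thm:P5} only delivers a dominating $P_3$, which does not meet the form required by (ii), so one must upgrade it using $P_5$-freeness. The key insight is to locate the \emph{private non-neighbors} of each endpoint of the dominating $P_3$; once they exist, $P_5$-freeness automatically closes them into the required induced $C_5$. This same template, replacing $P_3$ and $C_5$ by $P_{k-2}$ and $C_k$, is presumably what the authors will generalize in the body of the paper.
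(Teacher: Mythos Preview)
Your proof is correct and matches the paper's approach: the paper does not spell out a proof of Theorem~\ref{thm:P5chara} but presents it as an ``immediate implication'' of Theorem~\ref{thm:P5}, and your bootstrap step (choosing private non-neighbors of the two endpoints of the dominating $P_3$ and using $P_5$-freeness to force the closing edge) is exactly the argument the paper later carries out for general $k$ in the proof of Theorem~\ref{thm:PkCharacterization}. Your guess in the last paragraph is on the mark.
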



Later, van 't Hof and Paulusma~\cite{HP10} obtained a characterization for the class of $P_6$-free graphs in the flavour of Theorem~\ref{thm:P5chara}.
An earlier, slightly weaker result was given by Liu \emph{et al.}~\cite{LPZ07}, and the particular case of triangle free graphs was discussed before by Liu and Zhou~\cite{LZ94}.

\begin{theorem}[van 't Hof and Paulusma~\cite{HP10}]\label{thm:P6}
Let $G$ be a graph. The following assertions are equivalent.
\begin{enumerate}[(i)]
	\item $G$ is $P_6$-free.
	\item\label{cond:completespanner} Every induced subgraph $H$ of $G$ admits a connected dominating set $X$ such that $H[X]$ has a complete bipartite spanning subgraph or $H[X] \cong C_6$.  
\end{enumerate}
\end{theorem}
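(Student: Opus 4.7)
I would argue the contrapositive. If $G$ contains an induced $P_6$ on $u_1,\dots,u_6$, set $H := G[\{u_1,\dots,u_6\}]$. Every connected dominating set of $H$ must contain $u_2$ and $u_5$ (the unique neighbors of the leaves $u_1,u_6$) and hence also $u_3,u_4$ for connectivity; so each CDS of $H$ induces $P_4$, $P_5$, or $P_6$. A short inspection of each bipartition rules out a complete bipartite spanning subgraph in any of these paths, and none is isomorphic to $C_6$, so $H$ violates $(ii)$.

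\textbf{Hard direction $(i)\Rightarrow(ii)$.} Fix a connected induced subgraph $H$ of $G$ (the theorem tacitly treats only connected $H$, since disconnected graphs admit no connected dominating set) and take a minimum CDS $D$ of $H$. The plan rests on the Key Claim: $H[D]$ is $P_4$-free, or $H[D]\cong P_4$. Granting this, the two cases finish cleanly. If $H[D]$ is $P_4$-free, it is a connected cograph and (for $|D|\ge 2$) decomposes as a join $A\oplus B$, giving a spanning $K_{|A|,|B|}$ of $H[D]$; set $X:=D$. If $H[D]\cong P_4$ on vertices $d_1d_2d_3d_4$, then minimality of $D$ says $D\setminus\{d_1\}$, which still induces a connected $P_3$, cannot dominate $H$, forcing an external private neighbor $p_1\notin D$ of $d_1$; symmetrically there is $p_4$ private to $d_4$. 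The six vertices $p_1d_1d_2d_3d_4p_4$ span a graph whose only possible non-path edge is $p_1p_4$: if $p_1p_4\notin E$ we obtain an induced $P_6$ in $G$, contradicting $(i)$; hence $p_1p_4\in E$ and $H[D\cup\{p_1,p_4\}]\cong C_6$. This superset is still a CDS, so $X := D\cup\{p_1,p_4\}$ works.

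\textbf{Proof sketch of the Key Claim.} First I would show $H[D]$ is $P_5$-free. Given an induced $v_1v_2v_3v_4v_5$ in $H[D]$, minimality of $D$ says $D\setminus\{v_1\}$ is not a CDS. If connectivity is lost, $v_1$ is a cut vertex of $H[D]$ and some $u\in D$ lies in a different component of $H[D]-v_1$ from $v_2,\dots,v_5$ with $uv_1\in E$; then $uv_1v_2v_3v_4v_5$ is induced in $G$, since $u$ has no edges to $v_2,\dots,v_5$. If domination is lost, $v_1$ has a private external neighbor $p$, and $pv_1v_2v_3v_4v_5$ is induced. Either way we obtain an induced $P_6$ in $G$, contradicting $(i)$. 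Next, if $H[D]$ still contains an induced $P_4 = v_1v_2v_3v_4$ and some extra $w\in D$, I case-analyse on $N(w)\cap\{v_1,\dots,v_4\}$ together with the intermediate vertex of a shortest $w$-to-$v_i$ path in $H[D]$ (when $w$ has no direct neighbor in the path); each configuration either creates an induced $P_5$ inside $H[D]$ (contradicting the step just proved) or, via the private-neighbor and cut-vertex arguments, an induced $P_6$ in $G$.

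\textbf{Main obstacle.} The last case analysis, excluding $|D|\ge 5$ once $H[D]$ contains an induced $P_4$, is the technical core: the extra vertex $w$ can attach to the $P_4$ in several ways, and one must simultaneously deploy the $P_5$-freeness of $H[D]$ just established, the minimality of $D$, and the $P_6$-freeness of $G$. Once the Key Claim is in hand, the cograph decomposition in Case 1 and the $C_6$-augmentation in Case 2 are routine.
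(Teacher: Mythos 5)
Your overall plan is correct and, in effect, re-derives the stronger results this paper proves (Theorems~\ref{thm:sufficiency} and~\ref{thm:PkCharacterization} specialized to $k=6$); note that the paper only cites Theorem~\ref{thm:P6} from van 't Hof and Paulusma and never proves it directly, so the right comparison is with those two theorems. Your easy direction is fine, the cograph/join observation is fine, the $C_6$-completion via private neighbors of the two path-ends is exactly the argument in the proof of Theorem~\ref{thm:PkCharacterization}, and your proof that $H[D]$ is $P_5$-free (cut-vertex versus external private neighbor for a minimal CDS) is correct and is precisely Lemma~\ref{lem:PkCk}.

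The gap is the second half of your Key Claim -- ruling out $|D|\ge 5$ once $H[D]$ contains an induced $P_4$ -- which you acknowledge is the technical core but only gesture at. The proposed local case analysis on $N(w)\cap\{v_1,\dots,v_4\}$ does not close in one application of your two dichotomies. For instance, if $w$ is adjacent only to $v_2$, then $H[\{v_1,\dots,v_4,w\}]$ is a chair containing no induced $P_5$, and the private neighbors $p_1,p_w,p_4$ of the three leaves do not immediately produce an induced $P_6$: the non-edges one would like to exploit are successively \emph{forced} to be edges (first $p_1p_4$ and $p_wp_4$, then $p_1p_w$), the intermediate six-vertex configurations being induced $C_6$'s, which are perfectly compatible with $P_6$-freeness; a contradiction only emerges after chaining several such forcings, and the analysis changes again when the leaves are cut-vertices of a larger $H[D]$ or when $w$ attaches to the path in one of the other half-dozen ways. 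The paper avoids this combinatorial explosion entirely with a global argument that genuinely uses minimumness rather than minimality: Lemma~\ref{lem:substitution} shows that a private neighbor $y_1$ of $x_1$ can replace $x_{k-2}$, yielding another \emph{minimum} CDS with a shifted induced path; iterating $k-2$ times produces a minimum CDS $(D\setminus X')\cup\{y_1,\dots,y_{k-2}\}$ in which no $y_i$ sees $D\setminus X'$, so connectivity forces $D=X'$ and $H[D]\cong P_4$. As written, your Key Claim is therefore unproved; either import this substitution argument or carry out the full case analysis, including the multi-step forcing chains and the $|D|\ge 6$ cases.
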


Complementing Theorem~\ref{thm:P6}, van 't Hof and Paulusma give a polynomial time algorithm that, given a connected $P_6$-free graph, computes a connected dominating set $X$ such that $G[X]$ has a complete bipartite spanning subgraph or $G[X] \cong C_6$.
\medskip

In view of Theorems~\ref{thm:P5chara} and~\ref{thm:P6}, two questions arise.
The first one is whether condition~(\ref{cond:completespanner}) of Theorem~\ref{thm:P6} can be tightened, such that $H[X]$ is a $P_4$-free graph or $G[X] \cong C_6$.
Note that if $H[X]$ is $P_4$-free, it is a connected cograph, and in particular has a complete bipartite spanning subgraph.
This condition is the direct analogue of condition~(\ref{cond:P3-free}) of Theorem~\ref{thm:P5chara} for $P_6$-free graphs.
The advantage of the strenghtened version is of course that the structure of cographs is well understood and more restricted compared to the class of graphs having a spanning complete bipartite graph.

The second question is whether similar characterizations can be given for the class of $P_k$-free graphs, for $k > 6$.
In their paper, van 't Hof and Paulusma~\cite{HP10} explicitly ask for such a characterization in the case of $k=7$.

\subsection{Our contribution}

In this paper, we give an affirmative answer to these two questions.
We show that every connected $P_k$-free graph, $k \ge 4$, admits a connected dominating set whose induced subgraph is either $P_{k-2}$-free, or isomorphic to $P_{k-2}$.
Surprisingly, it turns out that every minimum connected dominating set has this property.

\begin{theorem}\label{thm:sufficiency}
Let $G$ be a connected $P_k$-free graph, $k \ge 4$, and let $X$ be any minimum connected dominating set of $G$. Then $G[X]$ is $P_{k-2}$-free, or $G[X] \cong P_{k-2}$. 
\end{theorem}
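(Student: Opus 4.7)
The argument is by contradiction: suppose $X$ is a minimum connected dominating set of $G$, $G[X]$ contains an induced $P_{k-2}=:P=x_1\cdots x_{k-2}$, and yet $X\supsetneq V(P)$. The key fact used throughout is that because $X$ is minimum, every $v\in X$ is either a cut vertex of $G[X]$ or has a \emph{private neighbor} in $V\setminus X$ (a vertex whose only neighbor in $X$ is $v$); otherwise $X\setminus\{v\}$ would still be a connected dominating set.

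The first phase shows that $G[X]$ is $P_{k-1}$-free. If $q_1\cdots q_{k-1}$ were an induced $P_{k-1}$ in $G[X]$, the dichotomy applied to $q_1,q_{k-1}$ yields: in the cut-vertex case, an extending neighbor on the opposite side produces an induced $P_k$ in $G$; in the private-neighbor case, $p_1q_1\cdots q_{k-1}p_{k-1}$ is either an induced $P_{k+1}\supseteq P_k$ or an induced $C_{k+1}$ (if $p_1\sim p_{k-1}$), and any vertex-deletion of $C_{k+1}$ yields $P_k$. In every alternative $G$ contains $P_k$, a contradiction. Applying the same cut-vertex extension to $x_1,x_{k-2}$ — now relying on $P_{k-1}$-freeness of $G[X]$ — forces private neighbors $p_1,p_{k-2}\in V\setminus X$; and since $p_1x_1\cdots x_{k-2}p_{k-2}$ would otherwise be induced $P_k$, we must have $p_1\sim p_{k-2}$, so $V(C_k):=\{p_1,x_1,\ldots,x_{k-2},p_{k-2}\}$ induces a $C_k$ in $G$.

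The second phase brings in any $y\in X\setminus V(P)$ adjacent to $V(P)$ in $G[X]$ (which exists because $G[X]$ is connected and $X\supsetneq V(P)$). Set $N_P(y)=\{x_i:y\sim x_i\}\neq\emptyset$, and observe $y\not\sim p_1,p_{k-2}$ by privacy. For each $v\in V(C_k)$, the $k$-vertex set $V(C_k)\setminus\{v\}\cup\{y\}$ induces $P_k$ exactly when $y$'s only neighbor in $V(C_k)\setminus\{v\}$ is one of the two endpoints of the path $V(C_k)\setminus\{v\}$ (these being the $C_k$-neighbors of $v$). A short scan over $v\in V(C_k)$ rules out $|N_P(y)|=1$ and $N_P(y)=\{x_j,x_{j+1}\}$ two consecutive vertices; hence $|N_P(y)|\ge 2$, and the extremes $a<b$ of $N_P(y)$'s indices satisfy $b-a\ge 2$.

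The third phase applies the minimality dichotomy to $y$: either $y$ has a private neighbor $p_y\in V\setminus X$, or $y$ is a cut vertex and a component of $G[X]\setminus\{y\}$ disjoint from $V(P)$ supplies a vertex $z\in X\setminus V(P)$ with $z\sim y$, $z\not\sim x_m$ for every $m$, and $z\not\sim p_1,p_{k-2}$ (by privacy of $p_1,p_{k-2}$). Writing $w\in\{p_y,z\}$, the core construction is the candidate induced path
\[
w-y-x_a-x_{a-1}-\cdots-x_1-p_1-p_{k-2}-x_{k-2}-\cdots-x_{b+1},
\]
of $k+2-(b-a)$ vertices, which is an induced $P_k$ whenever $b-a=2$ and $w$ is non-adjacent to both $p_1$ and $p_{k-2}$ — delivering the desired contradiction. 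The remaining subcases (with $w$ adjacent to $p_1$ or $p_{k-2}$, and with $b-a\ge 3$) are handled by symmetric variants of this path that reroute through $wp_1$, $wp_{k-2}$, or $p_1p_{k-2}$, together with, in the large-$b-a$ regime, an iteration of the minimality dichotomy along the chain of vertices of $X\setminus V(P)$ attached to $y$, extending the detour until it is long enough to force an induced $P_k$. \textbf{The main obstacle} lies in this last phase: enumerating the possible adjacencies among $\{p_1,p_{k-2},w\}$ and picking, in each configuration and each value of $b-a$, the correct routing around $C_k\cup\{y,w\}$ that produces the induced $P_k$, is the technically intricate heart of the proof.
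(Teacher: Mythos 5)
Your first two phases are sound: the $P_{k-1}$-freeness of $G[X]$ (this is the paper's Lemma~\ref{lem:PkCk}), the private neighbors $p_1,p_{k-2}$ of the endpoints, the forced edge $p_1p_{k-2}$ creating a $C_k$, and the elimination of the cases $|N_P(y)|=1$ and $N_P(y)$ consisting of two consecutive vertices all check out. The genuine gap is in your third phase, which you yourself flag as the ``technically intricate heart'' but do not carry out --- and as described it does not go through. Your candidate path $w\hbox{-}y\hbox{-}x_a\hbox{-}\cdots\hbox{-}x_1\hbox{-}p_1\hbox{-}p_{k-2}\hbox{-}x_{k-2}\hbox{-}\cdots\hbox{-}x_{b+1}$ has $k+2-(b-a)$ vertices, so it is strictly too short whenever $b-a\ge 3$, and your proposed fix (``iterating the minimality dichotomy along the chain of vertices of $X\setminus V(P)$ attached to $y$'') is unavailable precisely in the case where $y$ is \emph{not} a cut vertex of $G[X]$ and instead has a private neighbor $p_y$: then $w=p_y$ lies outside $X$, the minimality dichotomy does not apply to it, and there is no chain to extend. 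Concretely, for $k=7$ and $N_P(y)=\{x_1,x_4\}$, no subset of $\{p_y,y,x_1,\ldots,x_5,p_1,p_5\}$ induces a $P_7$; obtaining a contradiction there forces you to bring in private neighbors of the \emph{interior} vertices $x_2,x_3,x_4$ (e.g.\ $p_2,x_2,x_1,y,x_4,x_5,p_5$ is an induced $P_7$ unless $p_2$ is adjacent to $p_5$ or $p_1$), and killing those paths spawns further configurations, so the case analysis does not close with the vertex set you consider. The rerouting claims for $w$ adjacent to $p_1$ or $p_{k-2}$ are likewise asserted rather than verified.

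The paper avoids this combinatorial explosion entirely with a global argument you may want to compare against. Its Lemma~\ref{lem:substitution} shows that $(X\cup\{y_1\})\setminus\{x_{k-2}\}$, where $y_1$ is a private neighbor of $x_1$, is again a connected dominating set --- hence again \emph{minimum} --- and contains the shifted induced $P_{k-2}$ on $y_1,x_1,\ldots,x_{k-3}$, with $y_1$ having no neighbor in $X\setminus\{x_1\}$. Iterating this rotation $k-2$ times produces the minimum connected dominating set $(X\setminus V(P))\cup\{y_1,\ldots,y_{k-2}\}$ in which no $y_i$ has a neighbor in $X\setminus V(P)$; connectivity then forces $X\setminus V(P)=\emptyset$, i.e.\ $G[X]\cong P_{k-2}$. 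The contradiction is thus a disconnection of a rotated dominating set, not an explicit induced $P_k$, which is why a purely local search for a forbidden path around $C_k\cup\{y,w\}$ cannot be expected to suffice.
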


From this result we derive the following characterization of $P_k$-free graphs.

\begin{theorem}\label{thm:PkCharacterization}
Let $G$ be a graph and $k \ge 4$. The following assertions are equivalent.
\begin{enumerate}[(i)]
	\item\label{ass:Pk-free} $G$ is $P_k$-free.
	\item\label{ass:k-short-CDS} Every connected induced subgraph $H$ of $G$ admits a connected dominating set $X$ such that $H[X]$ is $P_{k-2}$-free or $H[X] \cong C_k$. 
\end{enumerate}
\end{theorem}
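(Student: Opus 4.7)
The plan is to prove the two implications separately, leaning on Theorem~\ref{thm:sufficiency} for the substantive direction. I would handle $(ii) \Rightarrow (i)$ by contrapositive: if $G$ contains an induced $P_k$, call it $H$, then every connected dominating set of $H$ must be a sub-interval of the path that covers both end-neighborhoods, hence contains all $k-2$ interior vertices; so $H[X]$ always contains an induced $P_{k-2}$, and $H$ being acyclic rules out $H[X] \cong C_k$. That contradicts condition~(ii) and gives the direction in one paragraph.

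For $(i) \Rightarrow (ii)$, I would pick an arbitrary connected induced subgraph $H$ of $G$ and apply Theorem~\ref{thm:sufficiency} to a minimum connected dominating set $X$ of $H$. If $H[X]$ is $P_{k-2}$-free the statement is immediate. The real work is the case $H[X] \cong P_{k-2}$, say with consecutive vertices $v_1, \ldots, v_{k-2}$; here the plan is to enlarge $X$ by exactly two vertices so that the resulting induced subgraph is a $C_k$.

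To find those two vertices I would exploit the minimality of $X$. Since $X \setminus \{v_1\}$ is still connected but, by minimality, cannot be a connected dominating set, there must exist a vertex $u_1 \notin X$ whose unique neighbor in $X$ is $v_1$; symmetrically there is a vertex $u_{k-2}$ attached to $X$ only via $v_{k-2}$, and $u_1 \neq u_{k-2}$ because their attachments in $X$ differ. The induced subgraph on $\{u_1, v_1, \ldots, v_{k-2}, u_{k-2}\}$ is chordless along the interior by assumption, and the private-attachment property forbids chords from $u_1$ or $u_{k-2}$ to interior vertices of the path. The only edge not yet determined is $u_1 u_{k-2}$: if it were absent we would have an induced $P_k$ in $H$, contradicting $P_k$-freeness, so it must be present, giving the desired $C_k$ and the connected dominating set $X \cup \{u_1, u_{k-2}\}$.

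I expect this closing-chord argument to be the main obstacle and essentially the only non-routine part of the proof. The conceptual point is that minimality supplies private neighbors at both endpoints of the $P_{k-2}$, and $P_k$-freeness then forces those private neighbors to be adjacent, turning the extended path into a cycle of length exactly $k$.
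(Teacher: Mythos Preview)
Your proposal is correct and follows essentially the same route as the paper's own proof: both directions are handled identically, invoking Theorem~\ref{thm:sufficiency} for a minimum connected dominating set, extracting private neighbors at the two ends of the $P_{k-2}$, and using $P_k$-freeness to force the closing chord that yields the $C_k$. Your treatment of $(ii)\Rightarrow(i)$ is slightly more explicit than the paper's one-line ``clearly $P_k$ has no such set'', and you take care to note $u_1 \neq u_{k-2}$, but there is no substantive difference in strategy.
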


We now come to the algorithmic dimension of the problem.
The proof of Theorem~\ref{thm:sufficiency} is constructive in the sense that it yields an algorithm to compute, given a $P_k$-free graph, a connected dominating set whose induced subgraph is either $P_{k-2}$-free, or isomorphic to $P_{k-2}$.
However, recall that the computation of a longest induced path in a graph is an NP-hard problem, as shown in Garey and Johnson~\cite[p.~196]{GJ79}.
In other words, there is little hope of computing in polynomial time the minimum $k$ for which the input graph is $P_k$-free.  
To overcome this obstacle, our algorithm can only make implicite use of the absent induced $P_k$, which is the main difficulty here.

\begin{theorem}\label{thm:algo}
Given a connected graph $G$ on $n$ vertices and $m$ edges, one can compute in time $\mathcal O (n^5(n+m))$ a connected dominating set $X$ with the following property:
for the minimum $k \ge 3$ such that $G$ is $P_k$-free, $G[X]$ is $P_{k-2}$-free or $G[X] \cong P_{k-2}$.
\end{theorem}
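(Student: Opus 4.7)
The plan is to design an iterative algorithm that starts from the trivial CDS $X_0 := V(G)$ and, at each step, replaces $X$ by a strictly smaller CDS via a local modification, halting when no further reduction is applicable. The reduction rule is obtained by turning the proof of Theorem~\ref{thm:sufficiency} into a constructive procedure: whenever $X$ is a CDS with $G[X]$ neither $P_{k-2}$-free nor isomorphic to $P_{k-2}$, the structural argument underlying Theorem~\ref{thm:sufficiency} should produce an explicit smaller CDS built from a bounded-size subset of $X$ together with a short induced path of $G$ that bypasses part of the current $X$. The algorithm then enumerates a polynomial-size family of such candidate modifications (indexed by a constant number of vertices) and accepts the first that yields a strictly smaller CDS.

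A crucial feature of this strategy is that the search is \emph{oblivious to $k$}: the candidate modifications are defined using only local data in $G$ and $X$, with no reference to the true value of $k$, and we simply accept any modification that strictly reduces $|X|$. Correctness upon termination is then immediate by contraposition. If the algorithm halted on some $X$ with $G[X]$ neither $P_{k-2}$-free nor $\cong P_{k-2}$, the constructive form of Theorem~\ref{thm:sufficiency} would furnish a smaller CDS inside our search family, contradicting the halting condition. In particular, we never need to compute $k$ or detect an induced $P_k$ explicitly, circumventing the hardness of computing the longest induced path.

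The main obstacle is precisely the strengthening of Theorem~\ref{thm:sufficiency} from an existence statement (``some minimum CDS works'') to an effective local-search certificate (``every bad CDS $X$ admits a reduction inside a fixed polynomial-size family''). This will likely require revisiting the exchange arguments behind Theorem~\ref{thm:sufficiency} to pin down explicit vertices that can be swapped, and to bound the arity of the required swap so that the search family stays polynomial in $n$. Once this is achieved, the complexity analysis is direct: $|X|$ strictly decreases at each iteration, giving at most $n$ iterations; each iteration enumerates $\mathcal O(n^4)$ candidate modifications and tests each for connectivity and domination in $\mathcal O(n+m)$ via BFS, for a total of $\mathcal O(n^5(n+m))$, matching the claimed bound.
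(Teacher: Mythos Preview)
Your high-level strategy---local search oblivious to $k$, with correctness by contraposition via the structural lemma---is exactly right, and matches the paper's approach. The genuine gap is in your termination argument. You assert that whenever $X$ is a ``bad'' CDS, the proof of Theorem~\ref{thm:sufficiency} furnishes a \emph{strictly smaller} CDS inside a bounded-arity swap family, so that $|X|$ drops at every iteration. But the exchange underlying Theorem~\ref{thm:sufficiency} (Lemma~\ref{lem:substitution}) only produces a CDS of the \emph{same} size: it removes one endpoint of the induced $P_{k-2}$ and adds a private neighbour of the other. Nothing in that argument forces the resulting set to be non-minimal, so you cannot count on a subsequent shrink. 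Indeed, the proof of Theorem~\ref{thm:sufficiency} never exhibits a smaller CDS; it repeatedly applies the same-size swap to a \emph{minimum} $X$ and deduces $X\setminus X'=\emptyset$ from connectivity, which is a non-constructive conclusion, not a reduction step. Your $\mathcal O(n)$ bound on the number of iterations therefore has no support.

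The paper resolves exactly this difficulty by abandoning strict size decrease and introducing a secondary potential. Each non-cutting vertex of $G[X]$ receives a weight equal to the length of its ``half-path'' (the maximal pendant path it starts), and $w(X)$ is the sum of the squares of these weights. The swap is then performed over ordered pairs $(v_i,v_j)$ of non-cutting vertices with $w_X(v_i)\ge w_X(v_j)$: add a private neighbour $y_i$ of $v_i$, remove $v_j$. The key calculation (Claim~\ref{claim:poset}) shows that if the size does not drop, $w(X)$ strictly increases; since $w(X)=\mathcal O(n^2)$, the combined potential bounds the number of iterations by $\mathcal O(n^3)$. With $\mathcal O(n^2)$ candidate pairs per iteration and $\mathcal O(n+m)$ per test, this yields the stated $\mathcal O(n^5(n+m))$. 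So the missing idea in your proposal is precisely this auxiliary weight function; without it (or an equivalent substitute), you have no termination proof.
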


Our last result is an application of the previous theorems.
A \emph{2-coloring} of a hypergraph assigns to each vertex one of two colors, such that each hyperedge contains vertices of both colors.
The problem \textsc{Hypergraph 2-Colorability} is to decide whether a given hypergraph admits a 2-coloring. Garey and Johnson~\cite[p.~221]{GJ79} explain that it is NP-complete in general.
One successful approach to deal with this hardness is to put restrictions on the bipartite vertex-hyperedge incidence graph\footnote{Recall that for a hypergraph $H = (V,E)$ we define the bipartite vertex-hyperedge incidence graph as the bipartite graph on the set of vertices $V \cup E$ with the edges $vY$ such that $v \in V$, $Y \in E$ and $v \in Y$. In the following, we just say the \emph{incidence graph}.} of the input hypergraph.

As an application of Theorem~\ref{thm:P6}, van 't Hof and Paulusma~\cite{HP10} show that \textsc{Hypergraph 2-Colorability} is solvable in polynomial time for hypergraphs with $P_6$-free incidence graph.
Using our results, we settle the case of hypergraphs with $P_7$-free incidence graph.

\begin{theorem}\label{thm:2-coloring}
\textsc{Hypergraph 2-Colorability} can be solved in polynomial time for hypergraphs with $P_7$-free incidence graph.
If it exists, a 2-coloring can be computed in polynomial time.
\end{theorem}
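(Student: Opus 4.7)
My plan is as follows. Let $H=(V,E)$ be the input hypergraph and $G$ its $P_7$-free incidence graph. I may assume $G$ is connected (else solve each component independently) and that every hyperedge has size at least 2 (else output ``no''). Applying Theorem~\ref{thm:algo} to $G$, I obtain in polynomial time a connected dominating set $X$ with $G[X]$ either $P_{k-2}$-free or isomorphic to $P_{k-2}$, where $k\ge 3$ is the minimum value for which $G$ is $P_k$-free. Since $G$ is $P_7$-free we have $k-2\le 5$, and hence $G[X]$ is $P_5$-free, or $G[X]\cong P_5$.

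Set $A := X\cap V$ and $B := X\cap E$. Since $G$ is bipartite with parts $V,E$, the pair $(A,B)$ is the bipartition of $G[X]$. The domination property yields two useful consequences: every $Y \in E\setminus B$ satisfies $Y\cap A\ne\emptyset$ (neighbors of $Y$ in $G$ live in $V$), and every $v\in V\setminus A$ lies in some hyperedge of $B$. The subcase $G[X]\cong P_5$ is easy: since $|A|\le 3$, I enumerate the at most $2^3=8$ colorings of $A$, and for each I test whether it extends to a 2-coloring of $V$. After propagating the forced constraints coming from the partial coloring of $A$, this residual extendibility test reduces to a simple bipartite-style assignment problem, solvable in polynomial time.

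The main subcase is $G[X]$ connected, bipartite, and $P_5$-free. Here I apply Theorem~\ref{thm:P5} to $G[X]$ to obtain a dominating clique or a dominating induced $P_3$ in $G[X]$; since $G[X]$ is bipartite, a dominating clique has at most 2 vertices, so there is in any case a dominating ``core'' $D\subseteq X$ with $|D|\le 3$. Using $D$, I partition $A$ into a bounded number of types based on adjacency to $D$ in $G[X]$; I then refine these types using the structural property that $P_5$-freeness imposes on bipartite graphs, namely that for any fixed $a\in A$ the family $\{N_{G[X]}(a')\cap N_{G[X]}(a):a'\in A\}$ forms a chain under inclusion. The goal is to argue that only polynomially many coloring ``signatures'' on $A$ need to be considered. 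For each signature, I perform the same extendibility test as in the previous subcase.

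The main obstacle is precisely this last step: bounding by a polynomial the number of coloring signatures of $A$ that must be tried, and verifying extendibility in polynomial time. The chain property of projected $B$-neighborhoods arising from $P_5$-freeness is the core combinatorial tool; combining it with the dominating core $D$ and carefully handling the interaction with hyperedges in $E\setminus B$ and vertices in $V\setminus A$ is the technical heart of the argument.
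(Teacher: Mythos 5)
Your proposal contains a genuine gap, and you name it yourself: the entire technical content---bounding polynomially the number of coloring signatures on $A$ and deciding extendibility---is left open. Two concrete problems. First, the ``extendibility test'' you invoke in both subcases is not routine: deciding whether a partial $2$-coloring extends to a full $2$-coloring of a hypergraph is in general as hard as \textsc{Hypergraph 2-Colorability} itself, since hyperedges whose colored vertices are all of one color, or which contain no colored vertex at all, still impose not-all-equal constraints on $V \setminus A$. Calling this a ``bipartite-style assignment problem'' does not make it polynomial; any such argument would have to exploit the $P_7$-freeness of the whole graph $G$, not merely of $G[X]$. Second, the chain property you cite lives inside $G[X]$, whereas the objects that make extension hard are exactly the hyperedges in $E \setminus B$ and the vertices in $V \setminus A$, which are not controlled by the structure of $G[X]$ alone. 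So even granting a polynomial bound on signatures, the reduction is incomplete.

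For comparison, the paper's proof avoids any extension problem. It first reduces to the case where $H$ is a \emph{clutter} (deleting every hyperedge that contains another), a normalization your proposal omits but on which all subsequent arguments rest. It then applies Theorem~\ref{thm:PkCharacterization} twice to obtain a single hyperedge $e$ that intersects every other hyperedge, and performs a case analysis on $|e|$ and on the hyperedges meeting $e$ in exactly one vertex. In each case it either exhibits an explicit $2$-coloring of the form $(S, V\setminus S)$ for an explicitly constructed small set $S$ (a proper subset of $e$ dominating $E$, or $\{x,y,z\}$ with $x,y \in e$ and $z \notin e$, or an intersection of hyperedges), or certifies that no $2$-coloring exists---each time deriving an induced $P_7$ whenever the construction would fail. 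You would need to either adopt this explicit-construction strategy or supply full proofs of both missing steps before your plan constitutes a proof.
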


The proof of our results we give in the subsequent sections.
We close the paper with a short discussion of our contribution.

\section{Proofs}

\subsection{Proof of Theorems~\ref{thm:sufficiency} and~\ref{thm:PkCharacterization}}

We need the following lemma from an earlier paper of ours~\cite{CS13}.

\begin{lemma}[Camby and Schaudt~\cite{CS13}]\label{lem:PkCk}
Let $G$ be a connected graph that is $(P_k,C_k)$-free, for some $k \ge 4$, and let $X$ be a minimal connected dominating set of $G$.
Then $G[X]$ is $P_{k-2}$-free.
\end{lemma}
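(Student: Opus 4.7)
The plan is to argue by contradiction: assume $G[X]$ contains an induced $P_{k-2}$, say $v_1 v_2 \cdots v_{k-2}$, and extend this path by one vertex at each endpoint to obtain an induced $P_k$ or $C_k$ in $G$.

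The first step is to exploit the minimality of $X$ at each endpoint $v_i$ with $i \in \{1, k-2\}$. Since $X \setminus \{v_i\}$ fails to be a connected dominating set, one of two things happens: either $v_i$ has a \emph{private neighbor} $u_i \in V \setminus X$, meaning $u_i$ is adjacent to $v_i$ but to no other vertex of $X$; or $v_i$ is a cut vertex of $G[X]$, in which case (using connectedness of $G[X]$) there is some $y_i \in X$ adjacent to $v_i$ that lies in a component of $G[X \setminus \{v_i\}]$ not containing $v_{3-i/\ldots}$ — concretely, $y_1$ lies in a component different from the one containing $v_2$, and $y_{k-2}$ in one different from the component containing $v_{k-3}$. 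Either way, I obtain a vertex $w_i \notin \{v_1, \ldots, v_{k-2}\}$ adjacent to $v_i$.

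The second step is to verify that $w_1 v_1 v_2 \cdots v_{k-2} w_{k-2}$ is an induced path when $w_1 w_{k-2} \notin E(G)$, and induces a $C_k$ otherwise. For this I must rule out all chords. The chords among the $v_j$'s are forbidden by assumption. If $w_1 = u_1$ is a private neighbor, then $u_1$ has no neighbor in $\{v_2, \ldots, v_{k-2}\}$ by definition; if $w_1 = y_1 \in X$, the same holds because $y_1$ lies in a component of $G[X \setminus \{v_1\}]$ disjoint from $v_2, \ldots, v_{k-2}$. Symmetric reasoning handles $w_{k-2}$. The only delicate chord is $w_1 w_{k-2}$: if one of them is a private neighbor in $V \setminus X$ and the other is in $X$, the private-neighbor definition rules out the edge directly (so we land in the $P_k$ case). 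If both are private neighbors or both come from the cut-vertex case, the edge $w_1 w_{k-2}$ may or may not exist, giving an induced $C_k$ or an induced $P_k$ respectively — both contradicting $(P_k, C_k)$-freeness.

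The one point that requires small care, and I expect to be the main bookkeeping step, is the ``both endpoints are cut vertices'' subcase: there I must show $y_1 \neq y_{k-2}$, because otherwise the constructed object has fewer than $k$ vertices. This follows from a short connectivity argument in $G[X \setminus \{v_{k-2}\}]$: the induced path $v_1 v_2 \cdots v_{k-3}$ shows $v_1$ and $v_{k-3}$ lie in the same component after removing $v_{k-2}$, while $y_{k-2}$ lies in a different one; if $y_1 = y_{k-2}$, then (since $y_1 \neq v_{k-2}$) $y_1$ is adjacent to $v_1$ in $G[X \setminus \{v_{k-2}\}]$, forcing $y_1$ into the component of $v_1$, a contradiction. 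Once this is settled, the case analysis closes uniformly and the lemma follows.
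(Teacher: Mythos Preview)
The paper does not give its own proof of this lemma: it is imported from the authors' earlier manuscript~\cite{CS13} and stated without argument. Your proof is correct and is the natural one; it is also fully consistent with the style of the surrounding proofs in the paper (compare the cut-vertex extension step in the proof of Lemma~\ref{lem:substitution}). The distinctness argument $y_1 \neq y_{k-2}$ is handled cleanly, and the private-neighbor/cut-vertex dichotomy you rely on is exactly the one the paper records just before Lemma~\ref{lem:substitution}, so nothing external is being assumed.
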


When applied to $P_k$-free graphs, which are in particular $(P_{k+1},C_{k+1})$-free, the above lemma implies that any minimal connected dominating set induces a $P_{k-1}$-free graph, for $k \ge 3$.
We next prove a simple but useful lemma, which plays a key role also in the proof of Theorem~\ref{thm:algo}.
Let $X$ be a connected dominating set of a graph $G$, and $x \in X$.
Assuming that $X$ is a minimal connected dominating set and $|X|\ge 2$, $x$ is a cut-vertex of $G[X]$ or $x$ has a \emph{private neighbor}:
a vertex $y \in V(G) \setminus X$ with $N_G(y) \cap X = \{x\}$.

\begin{lemma}\label{lem:substitution}
Let $G$ be a $P_k$-free graph, for some $k \ge 4$, and let $X$ be a minimal connected dominating set of $G$.
Assume that there is an induced $P_{k-2}$ in $G[X]$, say on the vertices $x_1,x_2,\ldots,x_{k-2}$.
Then any private neighbor $y$ of $x_1$ is such that $(X \cup \{y\}) \setminus \{x_{k-2}\}$ is a connected dominating set of $G$.
\end{lemma}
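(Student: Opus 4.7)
The plan is to set $X' := (X \cup \{y\}) \setminus \{x_{k-2}\}$ and show that $X'$ is a connected dominating set of $G$. Both properties will follow from a single \emph{extension observation}: if $u \in V(G) \setminus \{y, x_1, \ldots, x_{k-2}\}$ satisfies $u \sim x_{k-2}$, $N_G(u) \cap X \subseteq \{x_{k-2}\}$, and $u \not\sim y$, then $y, x_1, x_2, \ldots, x_{k-2}, u$ is an induced $P_k$ in $G$, contradicting the $P_k$-freeness of $G$. Indeed, $y$ meets $X$ only at $x_1$, the $x_i$ form an induced $P_{k-2}$, and $u$ meets $X$ only at $x_{k-2}$, so the only required edges are $y x_1$, the path edges among the $x_i$, and $x_{k-2} u$, while every other pair is a non-edge.

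For connectedness I argue by contradiction: suppose $G[X']$ is disconnected. Since $y \sim x_1 \in X'$, this forces $x_{k-2}$ to be a cut vertex of $G[X]$. Let $C$ be a component of $G[X \setminus \{x_{k-2}\}]$ not containing $x_1$; because $x_1, \ldots, x_{k-3}$ form an induced $P_{k-3}$, they all lie in a single component, so $C$ is disjoint from $\{x_1, \ldots, x_{k-3}\}$. Since $G[X]$ is connected, some $w \in C$ is adjacent to $x_{k-2}$. Then $w \notin \{y, x_1, \ldots, x_{k-2}\}$, $w \not\sim y$ (as $y$'s only $X$-neighbor is $x_1$), and $N_G(w) \cap X$ avoids $\{x_1, \ldots, x_{k-3}\}$ by the choice of $C$. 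Taking $u = w$ in the extension observation produces the forbidden induced $P_k$.

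For domination, the vertices of $V(G)$ not in $X'$ are $x_{k-2}$ and the elements of $V(G) \setminus (X \cup \{y\})$. The vertex $x_{k-2}$ is dominated by $x_{k-3} \in X'$, which exists because $k \ge 4$. For any $v \in V(G) \setminus (X \cup \{y\})$, if $v$ has a neighbor in $X \setminus \{x_{k-2}\}$ it is already dominated by $X'$; otherwise $v$ is a private neighbor of $x_{k-2}$, and applying the extension observation with $u = v$ forces $v \sim y$, so $v$ is dominated by $y \in X'$.

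The main obstacle is the connectedness step: one must preclude that $x_{k-2}$ is a cut vertex of $G[X]$, even though it is merely an endpoint of some induced $P_{k-2}$ and could, a priori, carry essential connectivity of $X$. The extension observation is precisely what converts any hypothetical disconnecting neighbor $w$ of $x_{k-2}$ into a forbidden induced $P_k$, and this is where the $P_k$-free hypothesis plays its decisive role; the domination step is then a direct reapplication of the same observation.
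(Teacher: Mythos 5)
Your proof is correct, and it differs from the paper's in one worthwhile respect. For the domination step you argue exactly as the paper does: a vertex whose only $X$-neighbour is $x_{k-2}$ must be adjacent to $y$, since otherwise $y,x_1,\ldots,x_{k-2},u$ is an induced $P_k$. For the connectedness step, however, the paper first invokes its Lemma~\ref{lem:PkCk} (from an earlier paper of the authors) to conclude that $G[X]$ is $P_{k-1}$-free, and then derives a contradiction from the induced $P_{k-1}$ on $x_1,\ldots,x_{k-2},y'$ inside $G[X]$; you instead prepend the private neighbour $y$ to that same path and obtain an induced $P_k$ in $G$ directly, so your argument is self-contained and needs only the $P_k$-freeness hypothesis, not the external lemma. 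One small imprecision: your extension observation is stated with the hypothesis $N_G(u)\cap X\subseteq\{x_{k-2}\}$, but the vertex $w$ you feed into it in the connectedness step may well have neighbours elsewhere in its component $C\subseteq X$; what you actually verify, and what the proof of the observation actually uses, is only that $u$ is non-adjacent to $y$ and to $x_1,\ldots,x_{k-3}$, so you should state the observation with that weaker hypothesis. With that rewording everything goes through.
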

\begin{proof}
Note that $G$ is in particular $(P_{k+1},C_{k+1})$-free and thus, by Lemma~\ref{lem:PkCk}, $G[X]$ is $P_{k-1}$-free.

Let $X' := \{x_1,x_2,\ldots,x_{k-2}\}$.
Moreover, let $y$ be any private neighbor of $x_1$, and let $Y := (X \cup \{y\}) \setminus \{x_{k-2}\}$.
We have to prove that $Y$ is a connected dominating set of $G$.

Suppose for a contradiction that $G[Y]$ is not connected.
Hence, $x_{k-2}$ is a cut-vertex of $G[X]$.
In particular, there is some vertex $y' \in X$ such that $N_G(y') \cap X' = \{x_{k-2}\}$.
But then $G[{X'\cup\{y'\}}] \cong P_{k-1}$, a contradiction.

It remains to show that $Y$ is a dominating set.
Suppose the contrary, that is, there is some vertex $x'$ with $N_G[x'] \cap Y = \emptyset$.
As $X$ is a dominating set, $N_G[x'] \cap X = \{x_{k-2}\}$.
Because $x_{k-2}$ is adjacent to $Y$ and $x'$ is not adjacent to $Y$, $x' \neq x_{k-2}$.
But this means that $G[X' \cup \{y,x'\}] \cong P_k$, a contradiction.
\end{proof}

Now we can state the proof of Theorem~\ref{thm:sufficiency}.

\begin{proof}[Proof of Theorem~\ref{thm:sufficiency}.]
Let $X$ be a minimum connected dominating set of $G$.
As $G$ is in particular $(P_{k+1},C_{k+1})$-free, $G[X]$ is $P_{k-1}$-free, by Lemma~\ref{lem:PkCk}.
We have to show that $G[X]$ is $P_{k-2}$-free or isomorphic to $P_{k-2}$.

To see this, assume there is an induced $P_{k-2}$ in $G[X]$, say on the vertices $x_1,x_2,\ldots,x_{k-2}$.
Let $X':=\{x_1,x_2,\ldots,x_{k-2}\}$.
Note that $x_1$ is not a cut-vertex of $G[X]$: otherwise there is some vertex $y' \in X$ such that $N_G(y') \cap X' = \{x_1\}$, and hence $G[{X'\cup\{y'\}}] \cong P_{k-1}$.
This is a contradiction.
Thus, $x_1$ is not a cut-vertex of $G[X]$ and therefore has a private neighbor w.r.t.~$X$, say $y_1$.
By Lemma~\ref{lem:substitution}, $Y_1:=(X \cup \{y_1\}) \setminus \{x_{k-2}\}$ is a connected dominating set of $G$.
As $X$ is a minimum connected dominating set, $Y_1$ is a minimum connected dominating set, too.
Moreover, $y_1$ has no neighbor in $X \setminus \{x_1\}$, in particular in $X \setminus X'$.

By reapplying the argumentation to $Y_1$ and the induced $P_{k-2}$ on $y_1,x_1,x_2,\ldots,x_{k-3}$,
We obtain a vertex $y_2 \in V(G) \setminus Y_1$ such that $Y_2:=(Y_1 \cup \{y_2\}) \setminus \{x_{k-3}\}$ is a minimum connected dominating set of $G$ and $G[Y_2]$ contains an induced $P_{k-2}$ on the vertices $y_2,y_1,x_1,x_2,\ldots,x_{k-4}$.
Moreover, $y_2$ has no neighbor in $Y_1 \setminus \{y_1\}$, in particular in $X \setminus X'$.

Iteratively, we end up with a minimum connected dominating set $Y_{k-2}$, which is exactly $(X \setminus X') \cup \{y_1, \dots, y_{k-2}\}$. 
Since, for $i = 1,2, \ldots, k-2$, $y_i$ is not adjacent to $X \setminus X'$ and $G[Y_{k-2}]$ is connected, $X \setminus X'$ must be empty, hence $X=X'$.
Thus, $G[X] = G[X'] \cong P_{k-2}$.
This completes the proof.
\end{proof}

\begin{proof}[Proof of Theorem~\ref{thm:PkCharacterization}.]
Clearly $P_k$ does not have a connected dominating set satisfying~(\ref{ass:k-short-CDS}).
Hence, (\ref{ass:k-short-CDS}) implies (\ref{ass:Pk-free}).

Conversely, let $H$ be any connected induced subgraph of $G$, and let $X$ be a minimum connected dominating set of $H$.
By Theorem~\ref{thm:sufficiency}, $H[X]$ is $P_{k-2}$-free or $H[X] \cong P_{k-2}$.
If $H[X]$ is $P_{k-2}$-free, the assertion of (\ref{ass:k-short-CDS}) is satisfied.
Otherwise, let $x_1,x_2,\ldots,x_{k-2}$ be a consecutive ordering of the induced path $H[X]$.
In particular, $x_1$ and $x_{k-2}$ are not cut-vertices of $H[X]$.
As $X$ is minimum, there exists a private neighbor $y_i$ of $x_i$, for $i \in \{1,k-2\}$.
It must be that $y_1y_{k-2} \in E(H)$, since otherwise $H[X \cup \{y_1,y_{k-2}\}] \cong P_k$.
Hence, $H[X \cup \{y_1,y_{k-2}\}] \cong C_k$, as desired.
So, (\ref{ass:Pk-free}) implies (\ref{ass:k-short-CDS}).
\end{proof}

\subsection{Proof of Theorem~\ref{thm:algo}}

Before we state our algorithm, we need to introduce some notation and definitions.
For this, let us assume we are given a connected input graph $G$ on $n$ vertices and $m$ edges.
Let $X$ be an arbitrary connected dominating set of $G$.

By ${\it NC}(X)$ we denote the set of vertices in $X$ that are non-cutting in $G[X]$, i.e. for every $x \in {\it NC}(X), G[X\setminus \{x\}] $ is connected.
Let $x$ be a degree-1 vertex of $G[X]$.
We define the \emph{half-path} starting in $x$ to be the maximal path $(x,x_1,x_2,\ldots,x_s)$ in $X$ such that $|N_{G[X]}(x_i)|=2$ for each $i \in \{1,2,\ldots,s-1\}$.
For example, if the neighbor $y \in X$ of $x$ has degree at least 3, the half-path is simply $(x,y)$.
The \emph{length} of the half-path is then $s$.
To each $x \in X$ we assign a weight $w_X(x)$ as follows: 
\begin{enumerate}
  \item if $|N_{G[X]}(x)| \ge 2$, put $w_X(x)=0$, and
  \item if $|N_{G[X]}(x)| = 1$, put $w_X(x)=s$, where $s$ is the length of the half-path starting in $x$.
\end{enumerate}
Finally, the weight $w(X)$ of the set $X$ given by \[w(X) = \sum_{x \in X} (w_X(x))^2.\]
See Fig.~\ref{fig:weights} for an illustration of these definitions.

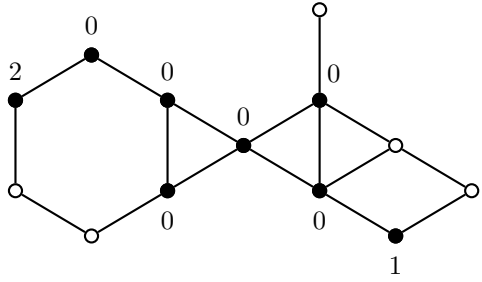
\begin{figure}[ht]
\begin{center}
\psset{yunit=0.6cm}
\begin{pspicture}(0,0)(6,5)

\cnode(0,1){0.1cm}{a}
\cnode(0,3){0.1cm}{b}
\cnode(1,0){0.1cm}{c}
\cnode(1,4){0.1cm}{d}
\cnode(2,1){0.1cm}{e}
\cnode(2,3){0.1cm}{f}
\cnode(3,2){0.1cm}{g}
\cnode(4,1){0.1cm}{h}
\cnode(4,3){0.1cm}{i}
\cnode(4,5){0.1cm}{j}
\cnode(5,0){0.1cm}{k}
\cnode(5,2){0.1cm}{l}
\cnode(6,1){0.1cm}{m}

\nput{90}{b}{$2$}
\nput{90}{d}{$0$}
\nput{90}{f}{$0$}
\nput{-90}{e}{$0$}
\nput{90}{g}{$0$}
\nput{60}{i}{$0$}
\nput{-90}{h}{$0$}
\nput{-90}{k}{$1$}

\pscircle*(0,3){0.1cm}
\pscircle*(1,4){0.1cm}
\pscircle*(2,1){0.1cm}
\pscircle*(2,3){0.1cm}
\pscircle*(3,2){0.1cm}
\pscircle*(4,1){0.1cm}
\pscircle*(4,3){0.1cm}
\pscircle*(5,0){0.1cm}

\ncarc[arcangle=0]{-}{a}{b}
\ncarc[arcangle=0]{-}{a}{c}
\ncarc[arcangle=0]{-}{b}{d}
\ncarc[arcangle=0]{-}{c}{e}
\ncarc[arcangle=0]{-}{d}{f}
\ncarc[arcangle=0]{-}{e}{f}
\ncarc[arcangle=0]{-}{e}{g}
\ncarc[arcangle=0]{-}{f}{g}
\ncarc[arcangle=0]{-}{g}{h}
\ncarc[arcangle=0]{-}{g}{i}
\ncarc[arcangle=0]{-}{h}{i}
\ncarc[arcangle=0]{-}{h}{k}
\ncarc[arcangle=0]{-}{h}{l}
\ncarc[arcangle=0]{-}{i}{j}
\ncarc[arcangle=0]{-}{i}{l}
\ncarc[arcangle=0]{-}{k}{m}
\ncarc[arcangle=0]{-}{l}{m}

\end{pspicture}
\end{center}
\caption{A graph $G$. The black vertices form a connected dominating set $X$ of $G$, with weights $w_X$ as shown. We have $w(X)=5$.}
\label{fig:weights}
\end{figure}

Let $\mathcal X$ be the family of all connected dominating sets of $G$.
We next define a strict partial order $\prec$ on $\mathcal X$ as follows.
For any two sets $X,Y \in \mathcal X$, we put $X \prec Y$ if 
\begin{enumerate}
  \item $|X|>|Y|$, or
  \item $|X|=|Y|$ and $w(X) < w(Y)$.
\end{enumerate}
The \emph{height} of the strict poset $(\mathcal X,\prec)$ is the maximum set of mutually comparable elements of $\mathcal X$.

\begin{lemma}\label{lem:height}
For a connected $n$-vertex graph $G$, the height of $(\mathcal X,\prec)$ is in $\mathcal O(n^3)$.
\end{lemma}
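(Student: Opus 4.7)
The plan is to bound the maximum weight $w(X)$ over all connected dominating sets $X$ by $\mathcal O(n^2)$, and then to read off the height bound from the structure of the order $\prec$. Along any chain $X_1 \prec X_2 \prec \cdots \prec X_L$ the sizes $|X_i|$ are non-increasing, and within each maximal block of equal-size sets the (integer) weights $w(X_i)$ are strictly increasing; since $|X_i|$ takes at most $n$ distinct values, the length satisfies $L \le n(W+1)$, where $W$ is the maximum value of $w$ over $\mathcal X$. Hence a bound $W = \mathcal O(n^2)$ immediately gives the desired $\mathcal O(n^3)$ height.

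To bound $w(X)$, I would denote by $L_X$ the set of degree-$1$ vertices (leaves) of $G[X]$ and by $D_X$ the set of degree-$2$ vertices, so that by definition $w(X) = \sum_{l \in L_X} s_l^2$, where $s_l$ is the length of the half-path starting at $l$. The key observation will be that the half-path from $l$ has exactly $s_l - 1$ internal vertices, all of which lie in $D_X$, and that these internal vertices form an initial segment of a unique maximal degree-$2$ subpath of $G[X]$. Since such a maximal degree-$2$ subpath is hit by the half-paths of at most two leaves (one per endpoint that is itself a leaf), a straightforward double-counting gives
\[ \sum_{l \in L_X} (s_l - 1) \le 2|D_X|, \]
and hence $\sum_{l \in L_X} s_l \le 2|D_X| + |L_X| \le 2n$.

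From this the weight bound will follow easily: since $s_l \le n-1$ for every leaf,
\[ w(X) = \sum_{l \in L_X} s_l^2 \le (n-1) \sum_{l \in L_X} s_l \le 2n(n-1) = \mathcal O(n^2). \]
Combined with the chain analysis from the first paragraph, this yields $L = \mathcal O(n^3)$. The main obstacle is the weight estimate itself: a naive bound of $n-1$ on each of up to $n$ leaves only gives $w(X) = \mathcal O(n^3)$, making the overall argument a factor of $n$ too weak. The double-counting along maximal degree-$2$ subpaths is therefore the crucial combinatorial step, and it is precisely the reason for squaring the half-path lengths in the definition of $w(X)$.
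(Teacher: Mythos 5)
Your proof is correct and follows essentially the same route as the paper's: both arguments establish $\sum_{l} s_l \le 2n$ by observing that a vertex of degree at most $2$ in $G[X]$ lies on at most two half-paths (at most one unless $G[X]$ is itself a path), deduce $w(X) = \mathcal O(n^2)$, and then multiply by the at most $n$ possible cardinalities of a connected dominating set. The only (immaterial) difference is that the paper finishes the weight estimate via $\sum_l s_l^2 \le \bigl(\sum_l s_l\bigr)^2 \le 4|X|^2$ instead of your $\bigl(\max_l s_l\bigr)\cdot\sum_l s_l$.
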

\begin{proof}
If $G[X]$ is not an induced path, every vertex in $X$ of degree at most 2 in $G[X]$ is contained in at most one half-path. 
Hence, $\sum_{x \in X} w_X(x) \le |X|$.
If $G[X]$ is an induced path, every vertex appears in at most two half-paths, implying $\sum_{x \in X} w_X(x) \le 2 |X|$.
Thus \[w(X) = \sum_{x \in X} (w_X(x))^2 \le (\sum_{x \in X} w_X(x))^2 \le 4 |X|^2,\]
and so the weight of a connected dominating set is in $\mathcal O(n^2)$.
Since there are at most $n$ different possible sizes of connected dominating sets of $G$, the height of $(\mathcal X,\prec)$ is in $\mathcal O(n^3)$.
\end{proof}

\begin{proof}[Proof of Theorem~\ref{thm:algo}]
Assume we are given a connected graph $G$ on $n$ vertices and $m$ edges as input.
Our algorithm works as follows, starting with the connected dominating set $Y := V(G)$.
Its output is a connected dominating set $X$ with the properties stated in Theorem~\ref{thm:algo}.
\begin{enumerate}
  \item\label{step:minimize} Compute a minimal connected dominating set $X \subseteq Y$.
  \item\label{step:return-path} If $G[X]$ is an induced path, return $X$ and terminate the algorithm.
  \item\label{step:half-paths} Compute the set ${\it NC}(X)$ and the weight $w_X(x)$ for every $x \in {\it NC}(X)$.
  \item\label{step:ordering} Order the vertices of ${\it NC}(X)$ with non-increasing weight $w_X$, breaking ties arbitrarily. Let that order be $v_1, v_2, \ldots, v_{|{\it NC}(X)|}$.
  \item\label{step:loop} For $i$ from 1 to $|{\it NC}(X)|$ do the following:
  \begin{enumerate}
    \item\label{step:private} Compute a private neighbor $y_i$ of $v_i$ w.r.t.~$X$.
    \item\label{step:innerloop} For $j$ from $i+1$ to $|{\it NC}(X)|$ do the following:
    \begin{enumerate}
       \item Check whether $Y_{ij} := (X \cup \{y_i\}) \setminus \{v_j\}$ is a connected dominating set.
       \item\label{step:substitution} If yes, put $X \leftarrow Y_{ij}$ and go to Step~\ref{step:minimize}.
     \end{enumerate} 
   \end{enumerate}
  \item\label{step:return} Return $X$ and terminate the algorithm.
\end{enumerate}
We remark that the computation of $y_i$ in Step~\ref{step:private} is always possible, since $x_i$ is non-cutting in $G[X]$ and $X$ is a minimal connected dominating set. 
The proof is completed by the following sequence of claims.

\begin{claim}
When the algorithm terminates, the output $X$ is a connected dominating set and $G[X]$ is $P_{k-2}$-free or $G[X] \cong P_{k-2}$.
\end{claim}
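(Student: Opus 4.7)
The plan is to establish two invariants about the algorithm — that $X$ is always a connected dominating set, and that at termination it is in fact a \emph{minimal} connected dominating set — and then to do a case analysis on the termination step. The invariants are straightforward: $Y = V(G)$ at the start is a connected dominating set of the connected graph $G$; Step~\ref{step:minimize} replaces $X$ by a minimal connected dominating set contained in it; and the substitution $X \leftarrow Y_{ij}$ in Step~\ref{step:substitution} is explicitly guarded by the check that $Y_{ij}$ is a connected dominating set. Whenever the algorithm terminates, the most recent modification of $X$ was the minimization in Step~\ref{step:minimize}, so $X$ is indeed a minimal connected dominating set. Since $G$ is $P_k$-free and $C_{k+1}$ itself contains an induced $P_k$, $G$ is $(P_{k+1},C_{k+1})$-free, and Lemma~\ref{lem:PkCk} applied with $k+1$ gives that $G[X]$ is $P_{k-1}$-free.

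If the algorithm terminates in Step~\ref{step:return-path}, then $G[X]$ is an induced path; being $P_{k-1}$-free it has at most $k-2$ vertices, so either $G[X] \cong P_{k-2}$ or $G[X] \cong P_\ell$ with $\ell < k-2$ (hence $P_{k-2}$-free), and we are done.

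The interesting case is termination in Step~\ref{step:return}. Here I would argue by contradiction that $G[X]$ is $P_{k-2}$-free. Suppose $G[X]$ contains an induced $P_{k-2}$ on vertices $x_1, x_2, \ldots, x_{k-2}$. Both endpoints must be non-cutting in $G[X]$: if, say, $x_1$ were a cut-vertex of $G[X]$, there would exist a neighbor $y \in X$ of $x_1$ lying in a component of $G[X \setminus \{x_1\}]$ different from the one containing $x_2, \ldots, x_{k-2}$; such a $y$ has no neighbor among $x_2, \ldots, x_{k-2}$, so $\{y, x_1, \ldots, x_{k-2}\}$ would induce a $P_{k-1}$ in $G[X]$, contradicting $P_{k-1}$-freeness. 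Hence $x_1, x_{k-2} \in {\it NC}(X)$, and by the symmetry of the two endpoints of the path we may assume $x_1$ precedes $x_{k-2}$ in the ordering of Step~\ref{step:ordering}; write $x_1 = v_i$ and $x_{k-2} = v_j$ with $i < j$. By Lemma~\ref{lem:substitution}, $Y_{ij} = (X \cup \{y_i\}) \setminus \{v_j\}$ is a connected dominating set, so the inner loop of Step~\ref{step:loop} would have performed this substitution and restarted at Step~\ref{step:minimize}, contradicting termination.

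The main obstacle is precisely this last case: one must translate the failure of every substitution attempted by the double loop into a structural statement about $G[X]$, which is exactly what Lemma~\ref{lem:substitution} is designed to provide. The weight-based ordering in Step~\ref{step:ordering} is irrelevant to the correctness of the output — it only plays a role in the termination and complexity analysis via the poset $\prec$ and Lemma~\ref{lem:height}.
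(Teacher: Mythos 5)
Your proof is correct and follows essentially the same route as the paper's: minimality of $X$ at termination together with Lemma~\ref{lem:PkCk} disposes of the induced-path case, and in the Step~\ref{step:return} case the two non-cutting endpoints of a hypothetical induced $P_{k-2}$ combined with Lemma~\ref{lem:substitution} contradict the failure of every attempted substitution. Your added remarks (the explicit cut-vertex argument for the endpoints and the observation that the weight-based ordering is irrelevant to output correctness, mattering only for termination) are accurate.
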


Since Step~\ref{step:minimize} is applied before the return is called, $X$ is a minimal connected dominating set.
If the algorithm terminates with Step~\ref{step:return-path}, $G[X]$ is $P_{k-1}$-free by Lemma~\ref{lem:PkCk}.
Hence, either $G[X] \cong P_{k-2}$ or $G[X]$ is $P_{k-2}$-free.

Now assume that the algorithm terminates in Step~\ref{step:return}.
In particular, $G[X]$ is not an induced path.
Suppose for a contradiction that $G[X]$ contains an induced $P_{k-2}$, say on the vertices $x_1,x_2,\ldots,x_{k-2}$.
Like in the proof of Lemma~\ref{lem:substitution}, both $x_1$ and $x_{k-2}$ cannot be cut-vertices of $G[X]$.
Thus, $x_1,x_{k-2} \in {\it NC(X)}$.

After Step~\ref{step:ordering}, the vertices of ${\it NC}(X)$ are ordered $v_1, v_2, \ldots, v_{|{\it NC}(X)|}$ with non-increasing weight.
W.l.o.g.~$x_1 = v_i$, $x_{k-2} = v_j$, and $i < j$.
As $X$ is returned, the set $Y_{ij} := (X \cup \{y_i\}) \setminus \{v_j\}$ is not a connected dominating set, in contradiction to Lemma~\ref{lem:substitution}.
This proves our claim.

\medskip

\begin{claim}\label{claim:poset}
Let $X$ be a minimal connected dominating set considered in some iteration of the algorithm.
Assume that the 'go to' is called in Step~\ref{step:substitution} because $Y_{ij} := (X \cup \{y_i\}) \setminus \{v_j\}$ is a connected dominating set.
Let $X'$ be the minimal connected dominating set computed in the subsequent Step~\ref{step:minimize}.
Then $X \prec X'$.
\end{claim}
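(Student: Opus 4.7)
The plan starts by observing that the call to Step~\ref{step:minimize} that produces $X'$ is applied with input $Y := Y_{ij}$, so $X' \subseteq Y_{ij}$ and hence $|X'| \le |Y_{ij}| = |X|$. If this inequality is strict then $X \prec X'$ holds by the first clause of the definition of $\prec$, and the claim is established. In the remaining case $|X'| = |X|$, equality forces $X' = Y_{ij}$, and it then remains to show that $w(Y_{ij}) > w(X)$.

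For the weight comparison, observe that $G[Y_{ij}]$ is obtained from $G[X]$ by deleting $v_j$ (with all incident edges) and attaching $y_i$ as a pendant at $v_i$; the latter uses crucially that $y_i$ is a private neighbor of $v_i$ with respect to $X$. Consequently, $w_X$ and $w_{Y_{ij}}$ can differ only at vertices on half-paths meeting $v_i$ or $v_j$, so the comparison is entirely local. Three ingredients drive it: (i) the non-increasing ordering of Step~\ref{step:ordering} combined with $i<j$ gives $w_X(v_i) \ge w_X(v_j)$; (ii) if $w_X(v_i) > 0$, i.e.\ $v_i$ has degree $1$ in $G[X]$, then $v_iv_j \notin E(G[X])$, for otherwise removing $v_j$ would isolate $v_i$ and contradict $v_j \in {\it NC}(X)$ (the degenerate case $|X|\le 2$ is precluded by Step~\ref{step:return-path}); (iii) the half-path starting at $y_i$ in $G[Y_{ij}]$ prolongs $v_i$'s old half-path and hence has length at least $1+w_X(v_i)$ when $v_iv_j \notin E(G[X])$, and length at least $1$ in any case.

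From here a short case analysis on whether $v_iv_j \in E(G[X])$ and whether $w_X(v_i) = 0$ completes the argument. In each case one tallies: the gain contributed by $y_i$; the loss $w_X(v_j)^2$ at the removed vertex; a possible loss $w_X(v_i)^2$ at $v_i$ when the substitution converts it from a leaf into an internal vertex; and possible gains at former neighbors of $v_j$ whose degree in $G[X]$ drops from $2$ to $1$ and which thereby become new half-path endpoints in $G[Y_{ij}]$. Using $w_X(v_i) \ge w_X(v_j)$, an elementary algebraic check yields $w(Y_{ij}) - w(X) > 0$ in every case. The main obstacle I foresee is the bookkeeping when $v_j$ has degree $\ge 2$ in $G[X]$: removing $v_j$ can simultaneously shorten an existing half-path passing through $v_j$ and spawn new half-paths at former degree-$2$ neighbors of $v_j$. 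However, because $v_j \in {\it NC}(X)$, the graph $G[X]-v_j$ remains connected, which restricts these local rearrangements to a controlled set of patterns and keeps the calculation tractable.
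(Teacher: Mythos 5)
Your plan is correct and follows essentially the same route as the paper's proof: reduce to $X'=Y_{ij}$, note that all weight changes are local to $y_i$, $v_i$, $v_j$ and $v_j$'s neighbour (because every vertex of a half-path other than its starting leaf is a cut-vertex of $G[X]$, hence distinct from $v_i,v_j\in{\it NC}(X)$), and close the ledger using $w_X(v_i)\ge w_X(v_j)$ from the ordering. One remark: the case you flag as the main obstacle, $\deg_{G[X]}(v_j)\ge 2$, is in fact the easy one, since then $w_X(v_j)=0$ and the rearrangements at $v_j$'s former neighbours can only increase weights; the delicate case is $\deg_{G[X]}(v_j)=1$, where the loss $w_X(v_j)^2$ must be offset by the new leaf created at $v_j$'s unique neighbour (contributing $(w_X(v_j)-1)^2$) together with the gain $(w_X(v_i)+1)^2-w_X(v_i)^2$ at $y_i$ -- exactly the terms your tally already contains, and the resulting difference is $2(w_X(v_i)-w_X(v_j))+2>0$.
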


Clearly $|X'| \le |X|$.
If $|X'| < |X|$, $X \prec X'$ by definition.
So we may assume that $|X'| = |X|$, and hence $X' = Y_{ij}$.
It remains to show that $w(X) < w(X')$.

Let $z \in X \setminus \{v_i,v_j\}$ be a degree-1 vertex of $G[X]$, and let $(z,x_1,x_2,\ldots,x_s)$ be a half-path starting in $z$.
As $G[X]$ is not a path, $x_s$ is a cut-vertex of $G[X]$.
In particular, $x_s \neq v_j$.
Hence, in $G[Y_{ij}]$, $(z,x_1,x_2,\ldots,x_s)$ is the initial segment of a half-path starting in $z$.
In particular, $w_{X'}(z) \ge w_X(z)$.

If $v_i$ is not a degree-1 vertex of $G[X]$, $w_{X'}(v_i)=w_{X}(v_i)=0$, and $(y_i,v_i)$ is the initial segment of a half-path starting in $y_i$.
Hence, $w_{X'}(y_i) \ge 1$, and thus 
\begin{equation}\label{eqn:weights}
w_{X'}(v_i) = 0 \mbox{ and } w_{X'}(y_i) \ge w_{X}(v_i)+1.
\end{equation}
If the degree of $v_i$ in $G[X]$ is 1, let $(v_i,x_1,x_2,\ldots,x_s)$ be a half-path starting in $v_i$.
Again, $x_s$ is a cut-vertex of $G[X]$, and so $x_s \neq v_j$.
Hence, in $G[X']$, $(y_i,v_i,x_1,x_2,\ldots,x_s)$ is the initial segment of a half-path starting in $y_i$.
Again~(\ref{eqn:weights}) holds.

Summing up, we see that (\ref{eqn:weights}) holds, and 
\begin{equation}\label{eqn:weights2}
w_{X'}(z) \ge w_X(z) \text{ for every vertex }z \in X' \setminus \{y_i,v_i\}.
\end{equation}

We now turn to the vertex $v_j$.
First assume that the degree of $v_j$ in $G[X]$ is at least 2, and thus $w_X(v_j)=0$.
Then, by (\ref{eqn:weights2}), \[w(X') - w(X) \ge w_{X'}(y)^2 - w_{X}(v_j)^2 > 0,\]
and so $w(X') - w(X) > 0$.

Now assume that $v_j$ is a vertex of degree 1 in $G[X]$, and so $w_X(v_j) \ge 1$.
Let $N_{G[X]}(v_j)=\{x\}$.
As $G[X]$ is not a path, $|N_{G[X]}(x)| \ge 2$, and so $w_X(x)=0$.
Thus $w_{X'}(x) = w_{X}(v_j)-1$.
Recall that (\ref{eqn:weights2}) holds, and $w_{X'}(z) \ge w_X(z)$ for every vertex $z \in X' \setminus \{y_i,v_i\}$.
We obtain the following inequality.
\begin{align*}
w(X') - w(X) 	& \ge w_{X'}(y_i)^2 + w_{X'}(x)^2 - w_{X}(v_i)^2 - w_{X}(v_j)^2\\
							& = (w_{X'}(y_i)^2 - w_{X}(v_i)^2) - (w_{X}(v_j)^2 - w_{X'}(x)^2)\\
							& \ge [(w_X(v_i)+1)^2 - w_{X}(v_i)^2] - [w_{X}(v_j)^2 - (w_{X}(v_j)-1)^2]
\end{align*}
But $w_X(v_i) \ge w_X(v_j)$ implies \[(w_X(v_i)+1)^2 - w_{X}(v_i)^2 > w_{X}(v_j)^2 - (w_{X}(v_j)-1)^2,\] and thus $w(X') - w(X) > 0$ holds as in the previous case.

Hence, $X \prec X'$, proving our claim.
\medskip

See Fig.~\ref{fig:weightschange} for an illustration of Step~\ref{step:substitution}.

\begin{figure}[ht]
\begin{center}
\psset{xunit=0.8cm,yunit=0.48cm}
\begin{pspicture}(0,0)(14,5)

\cnode(0,1){0.1cm}{a}
\cnode(0,3){0.1cm}{b}
\cnode(1,0){0.1cm}{c}
\cnode(1,4){0.1cm}{d}
\cnode(2,1){0.1cm}{e}
\cnode(2,3){0.1cm}{f}
\cnode(3,2){0.1cm}{g}
\cnode(4,1){0.1cm}{h}
\cnode(4,3){0.1cm}{i}
\cnode(4,5){0.1cm}{j}
\cnode(5,0){0.1cm}{k}
\cnode(5,2){0.1cm}{l}
\cnode(6,1){0.1cm}{m}

\nput{90}{b}{$2$}
\nput{90}{d}{$0$}
\nput{90}{f}{$0$}
\nput{-90}{e}{$0$}
\nput{90}{g}{$0$}
\nput{60}{i}{$0$}
\nput{-90}{h}{$0$}
\nput{-90}{k}{$1$}

\pscircle*(0,3){0.1cm}
\pscircle*(1,4){0.1cm}
\pscircle*(2,1){0.1cm}
\pscircle*(2,3){0.1cm}
\pscircle*(3,2){0.1cm}
\pscircle*(4,1){0.1cm}
\pscircle*(4,3){0.1cm}
\pscircle*(5,0){0.1cm}

\ncarc[arcangle=0]{-}{a}{b}
\ncarc[arcangle=0]{-}{a}{c}
\ncarc[arcangle=0]{-}{b}{d}
\ncarc[arcangle=0]{-}{c}{e}
\ncarc[arcangle=0]{-}{d}{f}
\ncarc[arcangle=0]{-}{e}{f}
\ncarc[arcangle=0]{-}{e}{g}
\ncarc[arcangle=0]{-}{f}{g}
\ncarc[arcangle=0]{-}{g}{h}
\ncarc[arcangle=0]{-}{g}{i}
\ncarc[arcangle=0]{-}{h}{i}
\ncarc[arcangle=0]{-}{h}{k}
\ncarc[arcangle=0]{-}{h}{l}
\ncarc[arcangle=0]{-}{i}{j}
\ncarc[arcangle=0]{-}{i}{l}
\ncarc[arcangle=0]{-}{k}{m}
\ncarc[arcangle=0]{-}{l}{m}

\cnode(8,1){0.1cm}{a}
\cnode(8,3){0.1cm}{b}
\cnode(9,0){0.1cm}{c}
\cnode(9,4){0.1cm}{d}
\cnode(10,1){0.1cm}{e}
\cnode(10,3){0.1cm}{f}
\cnode(11,2){0.1cm}{g}
\cnode(12,1){0.1cm}{h}
\cnode(12,3){0.1cm}{i}
\cnode(12,5){0.1cm}{j}
\cnode(13,0){0.1cm}{k}
\cnode(13,2){0.1cm}{l}
\cnode(14,1){0.1cm}{m}

\nput{-90}{a}{$4$}
\nput{90}{b}{$0$}
\nput{90}{d}{$0$}
\nput{90}{f}{$0$}
\nput{90}{g}{$0$}
\nput{60}{i}{$0$}
\nput{-90}{h}{$0$}
\nput{-90}{k}{$1$}

\pscircle*(8,1){0.1cm}
\pscircle*(8,3){0.1cm}
\pscircle*(9,4){0.1cm}
\pscircle*(10,3){0.1cm}
\pscircle*(11,2){0.1cm}
\pscircle*(12,1){0.1cm}
\pscircle*(12,3){0.1cm}
\pscircle*(13,0){0.1cm}

\ncarc[arcangle=0]{-}{a}{b}
\ncarc[arcangle=0]{-}{a}{c}
\ncarc[arcangle=0]{-}{b}{d}
\ncarc[arcangle=0]{-}{c}{e}
\ncarc[arcangle=0]{-}{d}{f}
\ncarc[arcangle=0]{-}{e}{f}
\ncarc[arcangle=0]{-}{e}{g}
\ncarc[arcangle=0]{-}{f}{g}
\ncarc[arcangle=0]{-}{g}{h}
\ncarc[arcangle=0]{-}{g}{i}
\ncarc[arcangle=0]{-}{h}{i}
\ncarc[arcangle=0]{-}{h}{k}
\ncarc[arcangle=0]{-}{h}{l}
\ncarc[arcangle=0]{-}{i}{j}
\ncarc[arcangle=0]{-}{i}{l}
\ncarc[arcangle=0]{-}{k}{m}
\ncarc[arcangle=0]{-}{l}{m}

\end{pspicture}
\end{center}
\caption{Before (left) and after (right) an application of Step~\ref{step:substitution}. In the next iteration, the algorithm terminates with the right connected dominating set as output.}
\label{fig:weightschange}
\end{figure}

\begin{claim}
The algorithm terminates in $\mathcal O (n^5(n+m))$ time.
\end{claim}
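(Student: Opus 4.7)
The plan is to count the number of outer iterations (returns to Step~\ref{step:minimize}) and multiply by the worst-case cost of a single iteration.

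For the iteration count, I would invoke the just-established Claim~\ref{claim:poset}: every time the 'go to' in Step~\ref{step:substitution} fires, the current minimal connected dominating set $X$ is replaced by one that is strictly $\prec$-larger. Because $\prec$ is a strict partial order and its height on $\mathcal X$ is $\mathcal O(n^3)$ by Lemma~\ref{lem:height}, Step~\ref{step:minimize} is executed at most $\mathcal O(n^3)$ times in total.

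For the per-iteration cost, the dominant contribution comes from the double loop in Step~\ref{step:loop}: for each of the $\mathcal O(n^2)$ pairs $(i,j)$ one has to verify whether $Y_{ij}$ is a connected dominating set, which reduces to a single BFS for connectivity together with a domination sweep, both in $\mathcal O(n+m)$ time. The remaining steps fit comfortably within the same budget: the greedy minimization in Step~\ref{step:minimize} takes $\mathcal O(n(n+m))$ by testing each vertex for removability; cut-vertex identification and half-path tracing in Step~\ref{step:half-paths} take $\mathcal O(n+m)$ via Tarjan's algorithm; the sort in Step~\ref{step:ordering} takes $\mathcal O(n \log n)$; and the at most $n$ invocations of Step~\ref{step:private} cost $\mathcal O(n(n+m))$ in aggregate. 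Thus one iteration costs $\mathcal O(n^2(n+m))$.

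Multiplying yields $\mathcal O(n^3) \cdot \mathcal O(n^2(n+m)) = \mathcal O(n^5(n+m))$, as claimed. I anticipate no real obstacle here: the nontrivial monotonicity argument has already been carried out in Claim~\ref{claim:poset}, so what remains is a careful but routine accounting of standard graph subroutines. The only place that warrants a brief sanity check is that the private neighbor in Step~\ref{step:private} always exists, but this follows from $v_i \in {\it NC}(X)$ together with the minimality of $X$, as already remarked after the statement of the algorithm.
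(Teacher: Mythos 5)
Your proposal is correct and follows essentially the same route as the paper: bound the number of iterations by $\mathcal O(n^3)$ via Claim~\ref{claim:poset} and Lemma~\ref{lem:height}, and bound each iteration by $\mathcal O(n^2(n+m))$ dominated by the double loop of connected-dominating-set checks. The only cosmetic difference is that you compute ${\it NC}(X)$ via Tarjan's cut-vertex algorithm in $\mathcal O(n+m)$ where the paper uses a straightforward $\mathcal O(n(n+m))$ procedure; both fit within the stated budget.
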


By Claim~\ref{claim:poset}, each call of the 'go to'-step and the subsequent application of Step~\ref{step:minimize} result in a connected dominating set that is properly larger in the order $\prec$.
By Lemma~\ref{lem:height}, the height of the poset $(\mathcal X,\prec)$, and hence the number of iterations the whole algorithm performs, is in $\mathcal{O}(n^3)$.

It remains to discuss the complexity of the particular steps.
For this, recall that it can be checked in time $\mathcal O(n+m)$ whether a given vertex subset is a connected dominating set.
Consequently, Step~\ref{step:minimize} can be performed in time $\mathcal O(n(n+m))$ by the immediate greedy procedure. 

Step~\ref{step:return-path} and the computation of the weights in Step~\ref{step:half-paths} can both be performed in linear time using the degree sequence of $G[X]$.
The computation of the set ${\it NC}(X)$ in Step~\ref{step:half-paths} can be done straightforwardly in time $\mathcal O(n(n+m))$.

It remains to discuss the complexity of the loop of Step~\ref{step:loop}.
The computation of a private neighbor in Step~\ref{step:private} is clearly done in $\mathcal O (n+m)$ time.
The inner loop of Step~\ref{step:innerloop} consumes $\mathcal O (n)$ checks whether some vertex set is a connected dominating set, requiring $\mathcal O (n+m)$ time each.
Hence, Step~\ref{step:loop} can be done in $\mathcal O (n^2(n+m))$ time.

The overall running time amounts to $\mathcal O (n^5(n+m))$, which completes the proof of both our claim and Theorem~\ref{thm:algo}.
\end{proof}

\subsection{Proof of Theorem~\ref{thm:2-coloring}}

\begin{proof}[Proof of Theorem~\ref{thm:2-coloring}]
Let $H=(V,E)$ be a hypergraph whose incidence graph is $P_7$-free.
A 2-coloring of $H$ we denote by $(A,B)$, where $A,B \subseteq V$ are two non-empty sets with $A \cup B = V$, each of which intersects every hyperedge.

A hypergraph for which any two hyperedges are not comparable (w.r.t.~inclusion) is called a \emph{clutter}.
The following observation was proven by van 't Hof and Paulusma~\cite{HP10}.
In order to be self-contained, we give a quick proof of it.

\begin{claim}\label{claim:clutter}
We may assume that $H$ is a clutter.
\end{claim}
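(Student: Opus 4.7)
The plan is to reduce to the clutter case by iteratively deleting redundant hyperedges, so I would start by pinning down what "redundant" means. Suppose $H$ has two hyperedges $Y_1, Y_2 \in E$ with $Y_1 \subsetneq Y_2$, and let $H' := (V, E \setminus \{Y_2\})$. I claim $H$ admits a 2-coloring if and only if $H'$ does. The forward direction is immediate, as $H'$ has strictly fewer constraints than $H$. For the reverse, given any 2-coloring $(A,B)$ of $H'$, both $A \cap Y_1$ and $B \cap Y_1$ are non-empty, and since $Y_1 \subseteq Y_2$, both $A \cap Y_2$ and $B \cap Y_2$ are non-empty as well; thus $(A,B)$ is also a 2-coloring of $H$.

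Next I would verify that this reduction is compatible with the hypothesis on the incidence graph. The incidence graph of $H'$ is precisely the induced subgraph of the incidence graph of $H$ obtained by deleting the single vertex corresponding to $Y_2$. Since $P_7$-freeness is hereditary under taking induced subgraphs, the incidence graph of $H'$ is again $P_7$-free. Therefore the reduction preserves the entire hypothesis, and any 2-coloring computed for $H'$ is directly a 2-coloring of $H$.

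Finally, I would iterate: as long as the current hypergraph has two comparable hyperedges, remove the larger one. Each step strictly decreases $|E|$, so after at most $|E|$ iterations we obtain a clutter $H^\ast$ whose incidence graph is $P_7$-free and which is 2-colorable precisely when $H$ is; moreover a 2-coloring of $H^\ast$ is a valid 2-coloring of $H$. Each iteration requires only checking pairs of hyperedges for inclusion, which is polynomial in the size of $H$, so the reduction runs in polynomial time. There is no substantive obstacle here — the argument is a straightforward hereditary-redundancy observation — so I would present it in essentially this compact form, which is why the authors also give only a short proof for the sake of being self-contained.
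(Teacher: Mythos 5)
Your argument is correct and matches the paper's proof essentially verbatim: delete the larger of any two comparable hyperedges, observe that 2-colorings are preserved in both directions because a bichromatic subset forces its supersets to be bichromatic, and note that $P_7$-freeness of the incidence graph is hereditary. No issues.
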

\begin{proof}
Assume there are hyperedges $e,f \in E$ with $e \subseteq f$.
Such a pair of hyperedges we can detect in polynomial time.

Every 2-coloring of $H$ is a 2-coloring of the hypergraph $H' = (V, E \setminus \{f\})$ in particular.
If $(A,B)$ is a 2-coloring of $H'$, it holds that $e \cap A \neq \emptyset$ and $e \cap B \neq \emptyset$.
Thus, $f \cap A \neq \emptyset$ and $f \cap B \neq \emptyset$, and so $(A,B)$ is a 2-coloring of $H$.

So we may delete, for every such pair $e,f \in E$ with $e \subseteq f$ the hyperedge $f$ from $H$.
It is clear that the resulting hypergraph is a clutter, and its incidence graph is still $P_7$-free.
This proves Claim \ref{claim:clutter}.
\end{proof}

Although immediate, Claim~\ref{claim:clutter} considerably simplifies the argumentation of the following proof.
We now assume that $H$ is a clutter.
Moreover, we may assume that $H$ is connected, that is, its incidence graph is connected.
In the following, we prove a sequence of claims that discuss all relevant cases for the 2-coloring problem.
We state the polynomial algorithm along the way.

Let $G$ be the incidence graph of $H$.
Since we are searching for a 2-coloring, we may assume that $|N_G(f)|\ge 2$ for every $f \in E$.
By Theorem~\ref{thm:PkCharacterization}, there is a connected dominating set $X$ of $G$ such that $G[X]$ is $P_5$-free or $G[X] \cong C_7$.
However, the latter case contradicts the fact that $G$ is bipartite. 
So, $G[X]$ is a connected $P_5$-free graph.

Using Theorem~\ref{thm:PkCharacterization} again, we see that $G[X]$ has a dominating $P_3$-free graph.
That is, there is a pair of adjacent vertices, say $v \in V$ and $e \in E$, that together dominate $G[X]$.
In particular, $e$ intersects every other hyperedge.
It is clear that we can compute such hyperedge in polynomial time.

%
%

\begin{claim}\label{clm minimality}
If there is a proper subset $X \subset e$ that dominates $E$, $(X, V \setminus X)$ is a 2-coloring of $H$.
\end{claim}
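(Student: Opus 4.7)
The plan is to verify directly the three defining conditions of a 2-coloring for the ordered pair $(X, V \setminus X)$: namely, that both classes are non-empty, and that every hyperedge intersects each of them.

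Non-emptiness is immediate from the hypotheses. The set $X$ is non-empty because it dominates $E$ in $G$ and $E$ is non-empty (we have already extracted the hyperedge $e$). The set $V \setminus X$ is non-empty because the inclusion $X \subsetneq e$ is assumed to be strict, so we can pick a vertex $v \in e \setminus X$.

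For the intersection condition, the assumption that $X$ dominates $E$ in the incidence graph translates exactly to the statement that every hyperedge $f \in E$ contains a vertex of $X$, so $f \cap X \neq \emptyset$ for all $f$. The real content is to show $f \cap (V \setminus X) \neq \emptyset$, i.e.\ $f \not\subseteq X$, for every hyperedge $f$. I would split this into two cases. If $f = e$, properness of $X \subset e$ gives a vertex of $e$ outside $X$, and we are done. If $f \neq e$, the key ingredient is the clutter hypothesis secured by Claim~\ref{claim:clutter}: no hyperedge is contained in another, and in particular $f \not\subseteq e$. Since $X \subseteq e$, this yields $f \not\subseteq X$, providing the required vertex in $V \setminus X$.

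The only non-trivial step is the invocation of the clutter property for $f \neq e$; everything else is a direct unpacking of the definitions. This is exactly why Claim~\ref{claim:clutter} was established beforehand: without it, a hyperedge strictly contained in $e$ could be swallowed entirely by $X$ and ruin the second color class.
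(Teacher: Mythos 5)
Your proof is correct and follows essentially the same route as the paper: domination of $E$ by $X$ gives one color in every hyperedge, and the clutter property (via $X \subseteq e$) gives the other. Your explicit treatment of the case $f = e$ via properness of the inclusion $X \subset e$ is a detail the paper's own proof glosses over, but it is not a different approach.
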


Let $f \in E$ be arbitrary.
By assumption, $f \cap X \neq \emptyset$.
Since $H$ is a clutter, $f \not\subseteq e$, and thus $f \not \subseteq X$.
Hence, $f \setminus X \neq \emptyset$, proving Claim~\ref{clm minimality}.\\

Indeed, it can be checked in polynomial time whether there is a proper subset $X \subset e$ that dominates $E$.
(If so, $X$ is found in polynomial time, too.)
In view of Claim \ref{clm minimality}, we may assume that no proper subset of $e$ dominates $E$.

We now make a distinction of the cases $|e|=2$ and $|e| \ge 3$.
Let us first assume that $|e| = 2$, say $e = \{x,y\}$.
Since $H$ is a clutter, every hyperedge $f$ of $H$ contains either $x$ or $y$.
Let $X,Y \subseteq E \setminus \{e\}$ such that every $f \in X$ contains $x$, every $g \in Y$ contains $y$, and $X \cup Y = E \setminus \{e\}$.

If $|X|=0$, every hyperedge contains $y$ and, as $H$ is a clutter, some other vertex.
Thus a 2-coloring of $H$ is given by $(\{y\},V \setminus \{y\})$.
By symmetry, we may now assume that $|X|,|Y| \ge 1$.
Observe that, if $|X|=1$, say $X=\{f\}$, $H$ is 2-colorable if and only if there is some vertex $v \in f$ such that $(\{v,y\},V\setminus\{v,y\})$ is a 2-coloring of $H$. 
Indeed, if for every vertex $v \in f$, $(\{v,y\}, V \setminus \{v,y\})$ is not a 2-coloring of $H$, there exists a hyperedge $e_v =\{v,y\}$ for each such vertex $v$. 
Let now $v$ be an arbitrary vertex in $f$. 
Then $x$ and $v$ must have the same color, and so there is a vertex $v' \in f$ with the second color. 
Then, the hyperedge $e_{v'} = \{v',y\}$ is monochromatic, a contradiction.
This condition can clearly be checked in polynomial time.

Now let $|X|,|Y|\ge 2$.
We next show that $H$ admits a 2-coloring.
To see this, pick any $f \in X$ and $g \in Y$.
Since $H$ is a clutter, $f \setminus e, g \setminus e \neq \emptyset$.
Pick any $u \in f \setminus e$ and $v \in g \setminus e$.
If $fv,gu \not\in E(G)$, $G[\{u,f,x,e,y,g,v\}] \cong P_7$, a contradiction.
As $u$ and $v$ were arbitrary, it must be that $f \setminus e \subseteq g \setminus e$ or $g \setminus e \subseteq f \setminus e$.

Now let $f,f' \in X$ and $g \in Y$ be three mutually distinct hyperedges.
As shown above, the sets $f \setminus e, f' \setminus e$ are comparable to $g \setminus e$.
Since $H$ is a clutter, $f \setminus e$ is not comparable to $f' \setminus e$.
Hence, either $f \setminus e, f' \setminus e \subseteq g$, or $g \setminus e \subseteq f, f'$.

In the first case, $f \setminus e \subseteq g$ for any $f \in X,g\in Y$.
Thus, $(\bigcup_{f \in X} f) \setminus e \subseteq \bigcap_{g \in Y} g$.
Since $H$ is a clutter, every $g \in Y$ has a neigbor outside the set $\{y\} \cup \bigcap_{g \in Y} g$.
Hence, \[(\{y\} \cup \bigcap_{g \in Y} g, V \setminus (\{y\} \cup \bigcap_{g \in Y} g))\] is a 2-coloring of $H$.
The second case, $g \setminus e \subseteq f, f'$, is dealt with in a similar fashion.

So we may assume $|e|\ge 3$.
Since no proper subset of $e$ dominates $E$ in $G$, the following holds:
for every $x \in e$ there is a hyperedge $f_x$ such that $f_x \cap e = \{x\}$.

\begin{claim}\label{clm equal outside}
For all $x,y \in e$, $f_x \setminus e = f_y \setminus e$.
\end{claim}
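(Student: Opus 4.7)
The plan is to argue by contradiction. Assuming $f_x \setminus e \ne f_y \setminus e$, I would (after possibly swapping the roles of $x$ and $y$ by symmetry) pick $u \in (f_x \setminus e) \setminus f_y$, and then exhibit an induced $P_7$ in the incidence graph $G$ in each of two sub-cases, contradicting the $P_7$-freeness established earlier in the argument.

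First, if there is also some $w \in (f_y \setminus e) \setminus f_x$, then I would show that the sequence $u, f_x, x, e, y, f_y, w$ induces a $P_7$. The six consecutive edges are immediate from the membership conditions. For the non-edges: vertex-vertex and hyperedge-hyperedge pairs vanish by bipartiteness of $G$; $u, w \notin e$ by construction; $y \notin f_x$ and $x \notin f_y$ follow from $f_x \cap e = \{x\}$ and $f_y \cap e = \{y\}$; and the two ``long'' non-adjacencies (from $u$ to $f_y$, and from $w$ to $f_x$) hold precisely by the choice of $u$ and $w$.

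The remaining case is $f_y \setminus e \subseteq f_x$, and here I would use both hypotheses $|N_G(f_y)| \ge 2$ and $|e| \ge 3$: the former forces $f_y \setminus e$ to be non-empty, letting me fix $z \in f_y \setminus e \subseteq f_x$, while the latter supplies a vertex $y' \in e \setminus \{x,y\}$. Then I claim the sequence $u, f_x, z, f_y, y, e, y'$ induces a $P_7$. The consecutive edges are again immediate, and the nontrivial non-edges reduce to $u \notin f_y$ (by choice), $u, z \notin e$ (by construction), $y, y' \notin f_x$ (from $f_x \cap e = \{x\}$ together with $y, y' \ne x$), and $y' \notin f_y$ (from $f_y \cap e = \{y\}$ and $y' \ne y$). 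The main obstacle is spotting this second pattern: once the first $P_7$ is unavailable, the natural impulse is to bring in a third vertex $z' \in e$ with its associated $f_{z'}$ and do a new case analysis, but the cleaner move is to route through an element $z \in f_x \cap f_y$ lying outside $e$ and to use $|e| \ge 3$ only to provide the ``extra'' endpoint $y'$ at the far end of $e$.
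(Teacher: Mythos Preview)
Your proof is correct and is essentially identical to the paper's own argument: the paper also splits into the two cases $(f_y\setminus e)\setminus f_x\neq\emptyset$ versus $f_y\setminus e\subseteq f_x$ and exhibits exactly the same two induced $P_7$'s (with different variable names, and invoking the clutter property rather than $|N_G(f_y)|\ge 2$ to get $f_y\setminus e\neq\emptyset$).
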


Let $x,y \in e$.
The case that $x = y$ is trivial.
So we may assume that $x \neq y$.

Suppose that there is a vertex $z \in f_x \setminus (e \cup f_y)$.
If there is a vertex $z' \in f_y \setminus (e \cup f_x)$, $G[\{z,f_x,x,e,y,f_y,z'\}] \cong P_7$, a contradiction.
Thus, $f_y \setminus (e \cup f_x) = \emptyset$, and so $f_y \setminus e \subseteq f_x \setminus e$.

Since $H$ is a clutter, there is a vertex $u \in f_y \setminus e$.
As $f_y \setminus e \subseteq f_x \setminus e$, $u \in (f_x \cap f_y) \setminus e$.
Since $|e| \ge 3$, there is a vertex $v \in e \setminus \{x,y\}$.
But then $G[\{z,f_x,u,f_y,y,e,v\}] \cong P_7$, a contradiction.

So, $f_x \setminus (e \cup f_y) = \emptyset$ and, for symmetry, $f_y \setminus (e \cup f_x) = \emptyset$.
This proves Claim \ref{clm equal outside}.
For an illustration, see Fig.~\ref{fig:outside}.

\begin{figure}[ht]
\begin{center}
\psset{xunit=1.2cm}
\begin{pspicture}(1,0)(5,2.5)

\cnode(1,0){0.1cm}{z}
\cnode(1,2){0.1cm}{fx}
\cnode(2,0){0.1cm}{u}
\cnode(3,2){0.1cm}{fy}
\cnode(3,0){0.1cm}{x}
\cnode(4,0){0.1cm}{y}
\cnode(5,2){0.1cm}{e}
\cnode(5,0){0.1cm}{v}

\nput{90}{fy}{$f_y$}
\nput{90}{e}{$e$}
\nput{90}{fx}{$f_x$}

\nput{-90}{u}{$u$}
\nput{-90}{v}{$v$}
\nput{-90}{y}{$y$}
\nput{-90}{x}{$x$}
\nput{-90}{z}{$z$}

\ncarc[arcangle=0]{-}{fy}{u}
\ncarc[arcangle=0]{-}{fy}{y}
\ncarc[arcangle=0]{-}{fx}{u}
\ncarc[arcangle=0]{-}{fx}{x}
\ncarc[arcangle=0]{-}{fx}{z}
\ncarc[arcangle=0]{-}{e}{x}
\ncarc[arcangle=0]{-}{e}{y}
\ncarc[arcangle=0]{-}{e}{v}

\end{pspicture}
\end{center}
\caption{The situation in the proof of Claim~\ref{clm equal outside}.}
\label{fig:outside}
\end{figure}
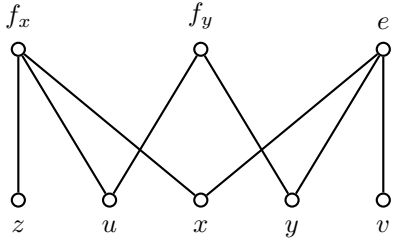

\begin{claim}\label{claim:small-hyperedges}
If $|f_x \setminus e| = 1$ for some $x \in e$, $H$ does not admit a 2-coloring.
\end{claim}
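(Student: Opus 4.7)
The plan is to first translate the hypothesis $|f_x \setminus e| = 1$ into a very rigid structural statement about all of the hyperedges $\{f_y : y \in e\}$ by invoking Claim~\ref{clm equal outside}. Writing $f_x \setminus e = \{u\}$, the definition of $f_x$ gives $f_x \cap e = \{x\}$, so $f_x = \{x,u\}$. For every other $y \in e$, Claim~\ref{clm equal outside} yields $f_y \setminus e = f_x \setminus e = \{u\}$, which together with $f_y \cap e = \{y\}$ forces $f_y = \{y,u\}$. So the hypothesis upgrades to: for every $y \in e$, the hyperedge $f_y$ is the two-element set $\{y,u\}$ with a common ``apex'' $u \notin e$.

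Next, I would derive a contradiction from an arbitrary hypothetical 2-coloring $(A,B)$ of $H$. Since $(A,B)$ partitions $V$ and $|e| \ge 3$, the constraint that $e$ meets both color classes gives vertices $y_1 \in e \cap A$ and $y_2 \in e \cap B$. Applying the 2-coloring condition to the hyperedge $f_{y_1} = \{y_1,u\}$ (which must contain a vertex of $B$) forces $u \in B$, while applying it to $f_{y_2} = \{y_2,u\}$ forces $u \in A$. This contradicts $A \cap B = \emptyset$, so no 2-coloring can exist.

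The entire argument is bookkeeping on top of Claim~\ref{clm equal outside}, and the only mild care needed is to remember that the assumption $|e| \ge 3$ (made just before Claim~\ref{clm equal outside}) provides the two distinct color witnesses $y_1, y_2$ inside $e$; I therefore do not anticipate any real obstacle here.
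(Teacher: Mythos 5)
Your proof is correct and follows essentially the same route as the paper: both use Claim~\ref{clm equal outside} to reduce to the situation where every $f_y$ ($y \in e$) equals $\{y,u\}$ for a common vertex $u \notin e$, and then derive a contradiction from the 2-coloring constraints on these pendant hyperedges together with the constraint on $e$ itself. (The paper phrases it as ``$u \in A$ forces $e \subseteq B$,'' while you pick differently colored $y_1,y_2 \in e$ and force $u$ into both classes; also, you only need $|e|\ge 2$ here, not $|e|\ge 3$ --- but these are cosmetic differences.)
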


Assume that $|f_x \setminus e| = 1$ for some $x \in e$.
By Claim \ref{clm equal outside}, there is a vertex $v \in V$ such that $f_y \setminus e = \{v\}$ for all $y \in e$.

Suppose that $(A,B)$ is a 2-coloring of $H$.
We may assume that $v \in A$.
Since for every $z \in e$, $f_z \cap B \neq \emptyset$, $e \subseteq B$ holds, a contradiction.
So Claim \ref{claim:small-hyperedges} holds.\\

It can be checked in polynomial time whether $|f_x \setminus e| = 1$ for some $x \in e$.
In view of Claim \ref{clm equal outside} and Claim \ref{claim:small-hyperedges}, we may now assume that $|f_x \setminus e| \ge 2$ for all $x \in e$.

\begin{claim}\label{clm explicite 2-coloring}
Let $x,y \in e$ be two arbitrary, distinct vertices and let $z \in f_x \setminus e$.
A 2-coloring of $H$ is given by $(\{x,y,z\},V \setminus \{x,y,z\})$.
\end{claim}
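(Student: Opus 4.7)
The plan is to show that both color classes are non-empty and every hyperedge meets both. Set $D := f_x \setminus e$, which by Claim~\ref{clm equal outside} coincides with $f_y \setminus e$, and fix once and for all two auxiliary vertices: $u \in D \setminus \{z\}$ (which exists because $|D| \ge 2$, the standing assumption after Claim~\ref{claim:small-hyperedges}) and $v \in e \setminus \{x,y\}$ (which exists because $|e|\ge 3$). Since $u \in V \setminus \{x,y,z\}$, both sides of the proposed partition are non-empty.

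For an arbitrary hyperedge $f$, I first verify that $f \cap \{x,y,z\} \neq \emptyset$. The case $f=e$ is immediate because $x\in e$, and the case $f\cap\{x,y\}\neq\emptyset$ is immediate as well. Otherwise $e$ still dominates $f$, so there is some $w\in f\cap e$ with $w\in e\setminus\{x,y\}$. Assuming for contradiction that $z\notin f$, the plan is to exhibit an induced $P_7$ in the incidence graph, namely the sequence $f_y, z, f_x, x, e, w, f$. The six consecutive adjacencies all hold by construction ($z\in f_y\cap f_x$, $x\in f_x\cap e$, $w\in e\cap f$), and the non-adjacency of every non-consecutive pair follows from bipartiteness together with the identities $f_x\cap e=\{x\}$, $f_y\cap e=\{y\}$, $D\cap e=\emptyset$, $w\notin\{x,y\}$, and the contradiction hypothesis $z\notin f$. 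This contradicts $P_7$-freeness and forces $z\in f$.

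Next I verify that $f \not\subseteq \{x,y,z\}$. Since $|f|\ge 2$ and $H$ is a clutter, the only candidates are $\{x,y\}$, $\{x,z\}$, $\{y,z\}$, and $\{x,y,z\}$. The first is strictly contained in $e$; the second in $f_x$ and the third in $f_y$, where I use $|f_x|=1+|D|\ge 3$ and similarly for $f_y$ to know the inclusions are strict; all three contradict the clutter property. The remaining case $f=\{x,y,z\}$ I would rule out by producing another induced $P_7$, namely $u, f_x, z, f, y, e, v$. The consecutive adjacencies use $u\in D\subseteq f_x$, $z\in f_x\cap f$, $y\in f\cap e$, $v\in e$; the non-adjacencies reduce to $u\notin e\cup\{z\}\cup f$ (because $u\in D$ and $u\ne z$), $v\notin f_x\cup f$ (because $v\in e\setminus\{x,y\}$ and $z\notin e$), $z\notin e$, $y\notin f_x$, plus the standard fact that two hyperedges are non-adjacent in the bipartite incidence graph.

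The main obstacle is purely the bookkeeping of the two candidate $P_7$'s: every non-consecutive pair must be checked, and the choices $|D|\ge 2$ and $|e|\ge 3$ are essential because they supply the ``spare'' vertices $u$ and $v$ without which the sequences would collapse. Once these checks are in place, no further ideas are required.
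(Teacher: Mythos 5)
Your proof is correct and follows essentially the same strategy as the paper: rule out a monochromatic hyperedge by exhibiting an induced $P_7$, using the spare vertices guaranteed by $|e|\ge 3$ and $|f_x\setminus e|\ge 2$. Your second path $u,f_x,z,f,y,e,v$ for the case $f=\{x,y,z\}$ is exactly the paper's; for a hyperedge $f$ disjoint from $\{x,y,z\}$ you use the single path $f_y,z,f_x,x,e,w,f$ (valid since $z\in f_y$ by Claim~\ref{clm equal outside}), which neatly avoids the paper's case distinction on whether the vertex of $f\setminus e$ lies in $f_x$; and your direct clutter argument disposing of the two-element subsets of $\{x,y,z\}$ is a clean substitute for the paper's count $|f\cap e|\ne 1$.
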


Let $x,y,z$ be chosen according to the claim.
Suppose that $(\{x,y,z\},V \setminus \{x,y,z\})$ is not a 2-coloring of $H$.
Thus there is an hyperedge $f$ with $f \subseteq \{x,y,z\}$ or $f \cap \{x,y,z\} = \emptyset$.

Let us first assume $f \subseteq \{x,y,z\}$.
In particular, $|f \setminus e| \le 1$.
Since $|f_{x'} \setminus e| \ge 2$ for all $x' \in e$, we know that $|f \cap e| \neq 1$.
As $N_G(e)$ dominates the set $E$, $|f \cap e| \ge 2$ and so $x,y \in f$.
Since $H$ is a clutter, $f \not\subseteq e$, and so $f = \{x,y,z\}$.

By assumption, $|f_x \setminus e| \ge 2$, and so there is a vertex $z' \in f_x \setminus (e \cup \{z\})$.
Moreover, since $|e| \ge 3$, there is a vertex $x' \in E \setminus \{x,y\}$.
But then $G[\{z',f_x,z,f,y,e,x'\}] \cong P_7$, a contradiction.

So we may assume $f \cap \{x,y,z\} = \emptyset$.
As $N_G(e)$ dominates $E$, $e \cap f \neq \emptyset$.
Let $x' \in e \cap f$.
As $H$ is not a clutter, there is some $z' \in f \setminus e$.
This situation is illustrated in Fig.~\ref{fig:{x,y,z}dominates}.

\begin{figure}[ht]
\begin{center}
\psset{xunit=1.2cm}
\begin{pspicture}(1,0)(5,2.5)

\cnode(1,0){0.1cm}{z'}
\cnode(1,2){0.1cm}{f}
\cnode(2,0){0.1cm}{x'}
\cnode(3,2){0.1cm}{e}
\cnode(3,0){0.1cm}{y}
\cnode(4,0){0.1cm}{x}
\cnode(5,2){0.1cm}{fx}
\cnode(5,0){0.1cm}{z}

\nput{90}{f}{$f$}
\nput{90}{e}{$e$}
\nput{90}{fx}{$f_x$}

\nput{-90}{z'}{$z'$}
\nput{-90}{x'}{$x'$}
\nput{-90}{y}{$y$}
\nput{-90}{x}{$x$}
\nput{-90}{z}{$z$}

\ncarc[arcangle=0]{-}{z'}{f}
\ncarc[arcangle=0]{-}{f}{x'}
\ncarc[arcangle=0]{-}{x'}{e}
\ncarc[arcangle=0]{-}{e}{y}
\ncarc[arcangle=0]{-}{e}{x}
\ncarc[arcangle=0]{-}{x}{fx}
\ncarc[arcangle=0]{-}{fx}{z}
\ncarc[arcangle=0,linestyle=dashed]{-}{fx}{z'}

\end{pspicture}
\end{center}
\caption{The situation in the proof of Claim~\ref{clm explicite 2-coloring}. The dashed edge is optional.}
\label{fig:{x,y,z}dominates}
\end{figure}
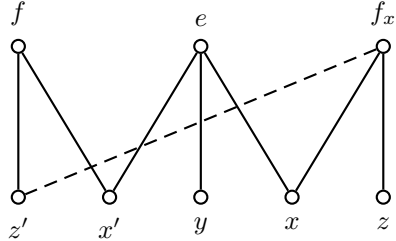

If $f_xz' \in E(G)$, $G[\{z,f_x,z',f,x',e,y\}] \cong P_7$, a contradiction.
Otherwise, $G[\{z,f_x,x,e,x',f,z'\}] \cong P_7$, another contradiction.
This proves Claim~\ref{clm explicite 2-coloring}.\\

Clearly, a 2-coloring as provided by Claim \ref{clm explicite 2-coloring} can be constructed efficiently.
This completes the proof.
\end{proof}

\section{Conclusion}

In this paper we gave a description of the structure of connected dominating sets in $P_k$-free graphs.
We have shown that any connected $P_k$-free graph admits a connected dominating set whose induced subgraph is $P_{k-2}$-free or isomorphic to $P_{k-2}$.
In fact, any minimum connected dominating set has this property.
Loosely speaking, this means that the restricted structure of connected $P_k$-free graphs results in an even more restricted structure of the induced subgraph of their minimum connected dominating sets.

Although we think that our results are of their own interest, our hope is that they might be useful in other contexts, too.
One example we gave is the polynomial time solvability of \textsc{Hypergraph 2-Colorability} for hypergraphs with $P_7$-free incidence graph. 
It seems possible that, with more work, one could push this result to hypergraphs with $P_8$-free incidence graph.
However, more interesting would be to know whether there is any $k$ for which \textsc{Hypergraph 2-Colorability} for hypergraphs with $P_k$-free incidence graph is \emph{not} solvable in polynomial time.
So far, we do not have an opinion or an intelligent guess on this question.

Other possible future applications of our results include the coloring of $P_k$-free graphs.
As mentioned earlier, Hoang~\emph{et al.}~\cite{HKLSS65} showed that $k$-\textsc{Colorability} is efficiently solvable on $P_5$-free graphs, using the fact that a connected $P_5$-free graph has a dominating clique or a dominating induced $P_3$.
To our knowledge, an open problem, conjectured by Huang~\cite{Hua13}, in this context is whether 4-colorability can be decided in polynomial time for $P_6$-free graphs.
From Theorem~\ref{thm:algo} it follows that, given a $P_6$-free graph, we can efficiently compute a connected dominating set that induces a $P_4$-free graph (that is a cograph) or a $P_4$.
Of course cographs are less trivial than cliques, especially when it comes to coloring -- but that does not rule out an approach similar to that of Ho\'ang~\emph{et al.}~\cite{HKLSS65}. 
The fact that each vertex of the graph has some neighbor in this cograph leaves a 3-coloring problem for the rest of the graph, once the coloring of the cograph is fixed.
Here, one might use the fact that 3-coloring is polynomial time solvable for $P_6$-free graphs, shown by Randerath and Schiermeyer~\cite{RS04}, even in the pre-coloring extension version, proven by Broersma~\emph{et al.}~\cite{BFGP13}.


\end{document}